\def\ps@pprintTitle{%
	\let\@oddhead\@empty
	\let\@evenhead\@empty
	\def\@oddfoot{}%
	\let\@evenfoot\@oddfoot}
\journal{arXiv.org}   
\newtheorem{theorem}{Theorem}
\theoremstyle{remark}
\newtheorem{remark}{Remark}
\begin{document}

\begin{frontmatter}

\title{Energy Management of a Hybrid Photovoltaic-Fuel Cell Grid-Connected Power Plant with Low Voltage Ride-Through Control}


\author[mymainaddress]{Adeel Sabir\corref{mycorrespondingauthor}}
\cortext[mycorrespondingauthor]{Corresponding author}
\ead{adeelsabir@uohb.edu.sa}

\address[mymainaddress]{P.O. Box 1803, University of Hafr Al Batin, Hafr Al Batin, Saudi Arabia}

\begin{abstract}
An energy management scheme is presented for a grid-connected hybrid power system comprising of a photovoltaic generator as the primary power source and fuel-cell stacks as backup generation. Power production is managed between the two sources such that a flexible operation is achieved, allowing the hybrid power system to supply a desired power demand by the grid operator. In addition, the energy management algorithm and the control system are designed such that the hybrid power system supports the grid in case of both symmetrical and asymmetrical voltage sags, thus, adding low voltage ride-through capability; a requirement imposed by a number of modern grid codes on distributed generation. During asymmetrical voltage sags, the injected active power is kept constant and grid currents are maintained sinusoidal with low harmonic content without requiring a phase locked loop or positive-negative sequence extraction, hence, lowering the computational complexity and design requirements of the control system. Several test case scenarios are simulated using detailed component models using the SimPowerSystems\textsuperscript{TM} toolbox of MATLAB/Simulink computing environment to demonstrate effectiveness of the proposed energy management control system under normal operating conditions and voltage sags. 
\end{abstract}

\begin{keyword}
Fuel cells\sep photovoltaics\sep energy management\sep low voltage ride-through\sep robust control  
\end{keyword}

\end{frontmatter}


\section{Introduction}
The intermittent nature of power generated from photovoltaic (PV) generators has led to the emergence of numerous hybrid power system solutions, where the PV resource is combined with another, more reliable power source like diesel generators, battery energy storage (BES) or fuel cell (FC) generators to mitigate the fluctuating PV power output due to weather and climatic conditions. Combining PV generation with FCs is one of the more attractive options due to a number of desirable FC features such as modularity, high power output and efficiency, flexibility, cogeneration, low maintenance cost and quite operation. Most importantly, FCs are a clean source of energy and in addition to serving as backup generators, they can also be used as primary sources of energy and have important implications toward the development of the smart grid concept \cite{Rahman1988,Bernstein2013}. 

In recent literature, a variety of hybrid PV-FC power system solutions have been proposed, where energy output is managed between the PV and FC generators to smoothen the power output and follow a desired load profile. The feasibility of this concept has been studied and established by some of the earlier works (see \cite{Rahman1988} and the references therein). The literature on hybrid FC power systems under review can be broadly classified into stand-alone \cite{El-Shatter2002,Jiang2006,Uzunoglu2009,Hatti2011,Saravanan2014,Kamal2016} and grid-connected \cite{Hajizadeh2010,Lajnef2013,Mohammadi2013,Bayrak2014,Eid2014,Patra2015,Abadlia2017} solutions. Our research focuses on a grid-connected solution, hence, some salient features of the relevant existing techniques are outlined below.

In \cite{Hajizadeh2010}, the authors have presented the control of a hybrid FC-supercapacitor (SC) distributed generation (DG) system during unbalanced voltage sags, using a  fuzzy logic based energy management scheme (EMS) and controllers. In \cite{Lajnef2013} and \cite{Bayrak2014}, the authors focus on the modeling of a hybrid PV-FC power system for grid-connected applications. An EMS for a grid-connected PV-FC-BES microgrid is proposed in \cite{Mohammadi2013} where the authors have applied fuzzy sliding mode (SM) control for the inner level current controllers, injecting controlled active and reactive powers for supporting the grid under normal operating conditions. In \cite{Eid2014}, an EMS for a PV-Wind-FC microgrid is proposed that coordinates the power output between the three sources, and a combination of the conventional proportional plus integral (PI) and hysteresis control are applied. In \cite{Patra2015}, an EMS for a PV-FC-BES system with backstepping maximum power point tracking (MPPT) control for the PV power source is proposed. In \cite{Abadlia2017}, the authors have contributed a SM control strategy for a hybrid PV-FC power system.

An increased focus toward the developing smart grid paradigm requires more flexibility and reliability from DG resources, but at the same time their increased presence in the modern grid poses an additional set of challenges to the stability and reliability of the grid. As a result, grid codes are increasingly mandating that the DG resources stay connected to the grid in case of abnormal conditions like voltage sags. This is to ensure that large amounts of generation is not lost during contingencies; a situation that can severely impact grid's stability. While the techniques proposed in the aforementioned literature are quite interesting and efficient, they deal largely with operation of the hybrid power systems under normal conditions; the requirement of grid support during abnormalities (e.g., asymmetrical voltage sags) has not been explored in those works and a majority of EMS techniques for hybrid power system reported in literature. An exception is \cite{Hajizadeh2010} where operation under asymmetrical voltage sags has been considered, but the authors have not provided sufficient details about power injection into the grid. Their technique also relies on positive-negative (PN) sequence calculation for voltages and currents, that requires a carefully designed phase locked loop (PLL) and imposes additional computational complexity and design requirements on the control system. Additionally, some of the existing solutions \cite{Hajizadeh2010,Eid2014,Abadlia2017} do not adequately describe the parameter selection details for the control system.

In light of this discussion, the aim of our research is to propose an efficient EMS for a grid-connected hybrid PV-FC power source that can effectively coordinate the power output of the PV and FC generators under a variety of scenarios. Moreover, the control system must be able to ride through voltage sags and provide dynamic grid support, in line with some modern grid code requirements \cite{Troester2009}. We seek a control system that can be systematically synthesized and have low computational complexity and design requirements. To this end, we propose an EMS with a robust, low voltage ride-through (LVRT) enabled control strategy for a grid-connected hybrid PV-FC power system with PV generation as the primary power source and FCs acting as backup generation, providing the energy deficit. The key contributions of this work are as follows:
\begin{itemize}
	\item An efficient EMS is proposed that can follow a desired load profile by effectively coordinating power between PV and FC generators under both normal conditions and voltage sags. The EMS along with the its control system is able to provide dynamic support to the grid in case of symmetrical and asymmetrical voltage sags.
	\item Controller gains are systematically synthesized by solving linear matrix inequalities (LMIs) that can be solved easily using any commercially available solver. Performance and convergence rate constraints are incorporated in the design to enhance its transient response and dynamic behavior. Parameter selection guidelines are also given.
	\item During asymmetrical voltage sags, real power delivered to the grid is kept constant and the injected currents are kept sinusoidal, without requiring a PLL or the PN sequence components of unbalanced voltages or currents, that are often employed for constant real power injection \cite{Mirhosseini2016}. This leads to a significant reduction in computational complexity and design requirements of the control strategy.
	\item Uncertainties, unmodeled dynamics and disturbances are formally treated to robustify the control system. Moreover, controller design does not rely on the dynamics of PV or FC generators, thus, adding to the robustness and disturbance immunity of the control scheme.
\end{itemize} 
The rest of this paper is organized as follows: description and model of the benchmark hybrid PV-FC system are given in Section \ref{sec:model}. EMS design and operation are detailed in Section \ref{sec:ems}. Control system design and details are described in Section \ref{sec:controldesign}. Simulation results are reported in Section \ref{sec:simulations}. Lastly, the paper is concluded in Section \ref{sec:conclusion}.
 
\section{Grid-Connected PV-FC System Model}\label{sec:model}
\subsection{Hybrid System Description}
Figure \ref{fig:pvfcsld} depicts the simplified schematic of the hybrid PV-FC system considered in this work. It consists of a $100~\mathrm{kW}$ PV array: $66$ parallel $5$-module strings, each module having a rating of $305~\mathrm{W}$ connected through a dc-dc boost converter to the dc-ac voltage source converter (VSC). Two $50~\mathrm{kW}$ solid oxide FC stacks with their dedicated dc-dc boost converters are connected in parallel to the PV array's dc-dc boost converter at the dc-link with a capacitor having a nominal value $C=12~\mathrm{mF}$. The VSC with $R_{on}=1~\mathrm{m\Omega}$, is connected through a three phase filter with nominal values $R_{n}=1~\mathrm{m\Omega}$ and $L_{n}=0.25~\mathrm{mH}$, and a $220~\mathrm{kVA}$, $260\mathrm{V}/25\mathrm{kV}$ transformer having resistance $R_1=R_2=0.001~\mathrm{pu}$ and inductance $L_1=L_2=0.03~\mathrm{pu}$ that connects to the $60~\mathrm{Hz}$ grid. A switching frequency of $6~\mathrm{kHz}$ is used for the pulsewidth modulation (PWM) signals. A dump load is connected at the output of the VSC for cases when power generation exceeds demand. The PV array along with its boost converter make up the PV generator while the FCs with their converters make up the FC generator. 
\begin{figure}[t!]
	\centering
	\includegraphics[scale=0.5]{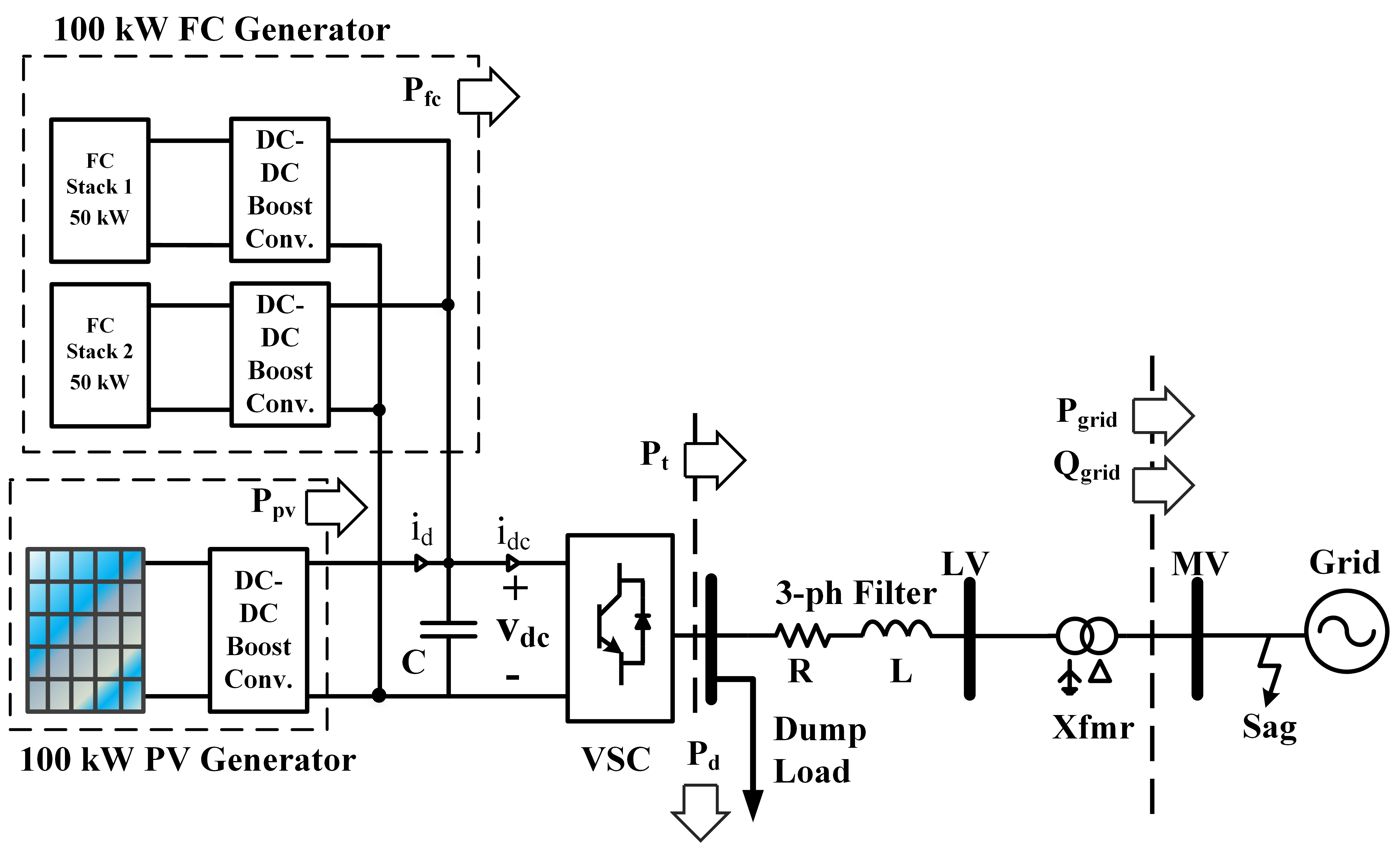}
	\caption{Grid-connected hybrid PV-FC system.}
	\label{fig:pvfcsld}
\end{figure}
Also shown in Fig. \ref{fig:pvfcsld} are the powers exchanged at different stages of the system. $P_{pv}(t)$ is the power output of the PV generator while $P_{fc}(t)$ is that of the FC generator. $P_{t}(t)$ is the power output of the VSC, $P_{d}(t)$ is the power consumed by the dump load, and $P_{grid}(t)$ and $Q_{grid}(t)$ are the real and reactive powers injected into the grid, respectively. LV denotes the low voltage bus while MV denotes the medium voltage bus; it is also where the voltage sags are applied. 

\subsection{Model}
PV cells are modeled using a single diode and including the series and shunt resistance parasitic effects, and the model reported in \cite{Zhu2002} is used for solid oxide FCs. DC-link voltage and grid current dynamics can be modeled in the stationary $\alpha\beta$ reference frame as:
\begin{equation}	\label{eq:model2}
\begin{aligned}
(C+\Delta C)\frac{d}{dt}v_{dc}(t)&=i_{d}(t)-i_{dc}(t)+ \zeta_{dc}(t),\\
\frac{d}{dt}i_\alpha(t)&=-\frac{R}{L}i_\alpha(t)+\frac{1}{L}u_{t\alpha}(t)-\frac{1}{L}V_{grid,\alpha}(t)+\zeta_{\alpha}(t),\\
\frac{d}{dt}i_\beta(t)&=-\frac{R}{L}i_\beta(t)+\frac{1}{L}u_{t\beta}(t)-\frac{1}{L}V_{grid,\beta}(t)+\zeta_{\beta}(t)
\end{aligned}	
\end{equation}
where $v_{dc}(t)$ is the dc voltage, $i_{d}(t)$ is the current output from the boost converter, $i_{dc}(t)$ is the current going into the VSC, $\Delta C$ is the additive uncertainty in the capacitance, and $\zeta_{dc}(t)$ represents all the disturbances and unmodeled quantities affecting $v_{dc}(t)$ dynamics. In the second and third equations of \eqref{eq:model2}, $i_j(t)$ are the grid current, $u_{tj}(t)$ are the VSC terminal voltages, $V_{grid,j}(t)$ are the grid voltages, $\zeta_j(t)$ denote the disturbances and unmodeled effects in grid current dynamics, and $j=\alpha,\beta$ denote the $\alpha\beta$ frame of reference. $R$ and $L$ are considered uncertain and their uncertainties are characterized in a later section. Resistance and inductance of the filter and transformer are lumped together into $R$ and $L$ in \eqref{eq:model2}, respectively, for ease of control design. All uncertainties and disturbances are assumed to be bounded. 

\section{Energy Management Scheme}	\label{sec:ems}
The EMS acts as a supervisory mechanism that manages the power generation of the hybrid PV-FC system by coordinating between the PV and FC generators. It also manages the system's power assignment during voltage sags by working in tandem with the sag detection mechanism. PV generation is the primary source of power while the FC acts as secondary generation supplying only the deficit power.

Under normal operation, the EMS reads the real and reactive power demands $P^{*}(t)$ and $Q^{*}(t)$ from the grid operator (GO). It then checks to see whether the real power demand $P^{*}(t)$ is less than or equal to the maximum power of the hybrid PV-FC system. If $P^{*}(t)$ exceeds the available maximum power, then it is set to the peak system power i.e., the available PV power $P_{pv}(t)$ plus the rated fuel cell power $P_{fc}^{rated}=100~\mathrm{kW}$. This check is included to supply the bulk of load demand from clean power sources i.e., PV and FC generators. In case a conventional dispatchable generator is included in the system e.g., a diesel generator, the unmet deficit can be assigned to it only after maximizing the power output of clean energy sources, resulting in lesser emissions. It is assumed that if a power deficit exists after extracting maximum power from the hybrid system, the GO is able to meet it through dispatchable generation or alternate sources. 

Following this step, the EMS checks if the apparent power reference $S^{*}(t)$ is within its limit $S^{max}=220~\mathrm{kVA}$ to ensure compatibility with the transformer and VSC power ratings. If the limit is exceeded, $Q^{*}(t)$ is adjusted, otherwise $Q^*(t)$ remains unchanged. At this stage, the EMS sets the FC generator's power reference $P_{fc}^{*}(t)$ and the dump load's reference power consumption $P_d^{*}(t)$. In case the available PV power $P_{pv}(t)$ exceeds demand $P^{*}(t)$, the surplus power is supplied to the dump load by setting it equal to $P_{d}^{*}(t)$ and $P_{fc}^{*}(t)$ is set to zero. The grid's reactive power reference $Q_{grid}^{*}(t)$ is then set equal to $Q^{*}(t)$. The EMS is designed to prioritize real power over reactive power delivery during normal conditions. 

If a voltage sag is detected, EMS sets the references for both real and reactive grid powers $P_{grid}^{*}(t)$ and $Q_{grid}^{*}(t)$, respectively. PV power production is curtailed by controlling its boost converter, described in a later section. The method used in \cite{Merabet2017} is employed for voltage sag detection and calculation of power references $P_{grid}^{*}(t)$ and $Q_{grid}^{*}(t)$ during sags. Apparent power limit is dynamically calculated to ensure that currents do not exceed their rated values during voltage sags. Flowchart of the EMS is depicted in Fig. \ref{fig:flowchart}.
\begin{figure}[t!]
	\centering
	\includegraphics[scale=0.7]{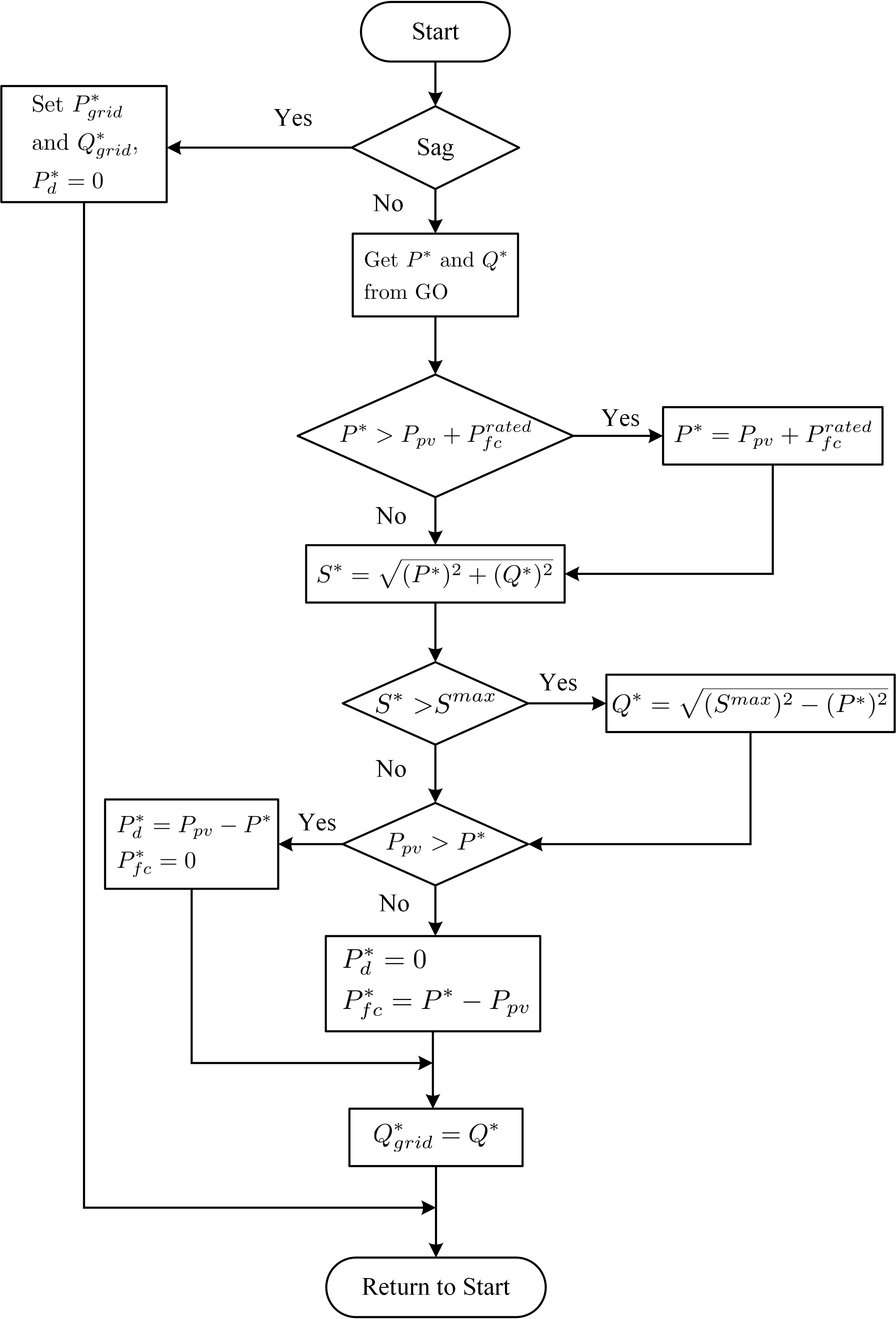}
	\caption{EMS flowchart.}
	\label{fig:flowchart}
\end{figure}

\subsection{Dump Load}
The dump load is used to maintain the active power balance in the system, in case the generation exceeds load demand. In practical systems, the dump load can be a space or water-heating system, or an electrolyzer that produces hydrogen fuel for the FCs \cite{Mendis2014,Abadlia2017}. 

\section{Control System Design and Details}\label{sec:controldesign}
We propose a combination of disturbance rejection control (DRC) and repetitive control for the hybrid PV-FC system due to their simplicity, robustness properties and ease of design. DRC is based on actively estimating the disturbances through an observer and cancelling them \cite{Han2009}. Repetitive control \cite{Inoue1981} is based on the internal model principal \cite{Francis1975} and is suitable for tracking arbitrary periodic signals. We show that a combination of DRC and repetitive control leads to a simple yet highly robust and efficient control system. The dc-link voltage $v_{dc}(t)$ is regulated using DRC while the grid currents are tracked using repetitive control. Details of controller design are presented in the following subsections.

\subsection{DC Voltage Control} \label{ssec:vdccontrol}
Consider the following model for $v_{dc}^{2}(t)$ dynamics \cite{Yazdani2010}:
\begin{equation}
\begin{aligned}
\frac{1}{2}(C+\Delta C)\frac{dv_{dc}^2(t)}{dt}&=P_{pv}(t)-P_{grid}(t)\\
&+P_{loss}(t)+\frac{2LP_{grid}(t)}{3\hat{V}_{grid}^2}\frac{dP_{grid}(t)}{dt}\\
&+\frac{2LQ_{grid}(t)}{3\hat{V}_{grid}^2}\frac{dQ_{grid}(t)}{dt}+\bar{\chi}_{dc}(t)
\end{aligned}
\label{eq:vdc2model1}
\end{equation}
where $\hat{V}_{grid}$ is the grid voltage amplitude, $P_{loss}(t)$ is the power lost due to switching, conduction and line resistances, and $\bar{\chi}_{dc}(t)$ represents the unmodeled effects and disturbances in $v_{dc}^2(t)$ dynamics. Equation \eqref{eq:vdc2model1} can be rewritten as 
\begin{equation}
\frac{dv_{dc}^2(t)}{dt}=\frac{2}{C}(P_{pv}(t)-P_{grid}(t))+\bar{\chi}_{1dc}(t)
\label{eq:vdc2model2}
\end{equation}  
where the term 
\begin{equation}
\begin{aligned}
\bar{\chi}_{1dc}(t)&=\frac{2}{C}\Big(P_{loss}(t)+\frac{2LP_{grid}(t)}{3\hat{V}_{grid}^2}\frac{dP_{grid}(t)}{dt}\\
&+\frac{2LQ_{grid}(t)}{3\hat{V}_{grid}^2}\frac{dQ_{grid}(t)}{dt}+\bar{\chi}_{dc}(t)-\Delta C\dot{v}_{dc}^{2}(t)/2\Big)
\end{aligned}
\label{eq:vdc2dist1}
\end{equation}  
is treated as a disturbance. Voltage $v_{dc}(t)$ can be controlled by regulating $P_{grid}(t)$ at its reference $P_{grid}^*(t)$ \cite{Yazdani2010}. Assuming fast power tracking, we have $P_{grid}(t)\simeq P_{grid}^{*}(t)$. From \eqref{eq:vdc2model2}, dynamics of $v_{dc}(t)$ can be written as
\begin{equation}
\frac{dv_{dc}(t)}{dt}=\frac{1}{Cv_{dc}(t)}(P_{pv}(t)-P_{grid}^{*}(t))+\chi_{dc}(t)
\label{eq:vdcmodel1}
\end{equation}
where $\chi_{dc}(t)=\bar{\chi}_{1dc}(t)/(2v_{dc}(t))$. Assuming that $P_{pv}(t)$ is measured and setting
\begin{equation}
P_{grid}^{*}(t)=P_{pv}(t)+Cv_{dc}(t)u(t)
\label{eq:Pgrid1}
\end{equation}
with $u(t)$ as an auxiliary control variable yields
\begin{equation}
\frac{dv_{dc}(t)}{dt}=-u(t)+\chi_{dc}(t).
\label{eq:vdcmodel2}
\end{equation}
Now, the first equation in \eqref{eq:model2} is rewritten as
\begin{equation}
\frac{dv_{dc}(t)}{dt}=\frac{1}{C}(i_{d}(t)-i_{dc}^{*}(t))+ \bar{\xi}_{dc}(t)
\label{eq:vdc3}
\end{equation}
where $\bar{\xi}_{dc}(t)=(1/C)(\zeta_{dc}(t)-\Delta C\dot{v}_{dc}(t))$ and $i_{dc}(t)$ has been replaced by $i_{dc}^{*}(t)$, a control variable. Let 
\begin{equation}
i_{dc}^{*}(t)=Cu(t)
\label{eq:idc1}
\end{equation}
and $i_d(t)$ be an unknown disturbance.Then, \eqref{eq:vdc3} becomes
\begin{equation}
\frac{dv_{dc}(t)}{dt}=-u(t)+\xi_{dc}(t)
\label{eq:vdcmodel3}
\end{equation}
where  $u(t)$ is a auxiliary control variable and
\begin{equation}
\xi_{dc}(t)=\frac{1}{C}i_d(t) + \bar{\xi}_{dc}(t)
\label{eq:xidc1}
\end{equation}
is the new unknown disturbance term. From the perspective of disturbance rejection control, equations \eqref{eq:vdcmodel2} and \eqref{eq:vdcmodel3} are identical and either one can be used for controller design. From \eqref{eq:vdcmodel3}, suppose 
\begin{equation}
\begin{aligned}
\dot{v}_{dc}(t)&=-u(t)+\xi_{dc}(t),\\
\dot{\xi}_{dc}(t)&=g(t),
\end{aligned}
\label{eq:vdcmodel4}
\end{equation}
where $g(t)$ is an unknown auxiliary signal to be determined. We can write \eqref{eq:vdcmodel4} in matrix form as
\begin{equation}
\begin{aligned}
\dot{x}_{dc}(t)&=A_{dc}x_{dc}(t)+B_{dc}u(t)+B_{\xi}g(t),\\
y_{dc}(t)&=C_{dc}x_{dc}(t),
\end{aligned}
\label{eq:vdcmodel5}
\end{equation}
with 
\begin{equation}
\begin{aligned}
x_{dc}(t)&=\begin{bmatrix}v_{dc}(t)\\ \xi_{dc}(t)\end{bmatrix},~
A_{dc}=\begin{bmatrix}0&1\\0&0\end{bmatrix},~
B_{dc}=\begin{bmatrix}-1\\0\end{bmatrix},\\
B_{\xi}&=\begin{bmatrix}0\\1\end{bmatrix},
~C_{dc}=\begin{bmatrix}1&0\end{bmatrix}.
\end{aligned}
\label{eq:xdc1}
\end{equation}
We design an observer with dynamics
\begin{equation}
\dot{\hat{x}}_{dc}(t)=A_{dc}\hat{x}_{dc}(t)+B_{dc}u(t)+K_{dc}^{-1}L_{dc}C_{dc}(x_{dc}(t)-\hat{x}_{dc}(t)),
\label{eq:vdcobs1}
\end{equation}
where $\hat{x}_{dc}(t)=\begin{bmatrix}\hat{v}_{dc}(t)&\hat{\xi}_{dc}(t)\end{bmatrix}^T$ is the observer's state vector, $K_{dc}\in\mathbb{R}^{2\times  2}$ is a symmetric positive definite matrix, $L_{dc}\in\mathbb{R}^{2\times 1}$ is an arbitrary real matrix, and $\hat{v}_{dc}(t)$ and $\hat{\xi}_{dc}(t)$ are the estimates of $v_{dc}(t)$ and $\xi_{dc}(t)$, respectively. The dynamics of observation error $e_{obs}(t)=x_{dc}(t)-\hat{x}_{dc}(t)$ can be obtained by subtracting \eqref{eq:vdcobs1} from \eqref{eq:vdcmodel5} as
\begin{equation}
\dot{e}_{obs}(t)=(A_{dc}-K_{dc}^{-1}L_{dc}C_{dc})e_{obs}(t) + B_{\xi}g(t).
\label{eq:eo1}
\end{equation}
Our objective is to find matrix variables $K_{dc}$ and $L_{dc}$ so that the when $g(t)=0$, dynamics \eqref{eq:eo1} are exponentially stable, and when $g(t)\neq 0$, the $\mathcal{L}_2$ gain
\begin{equation} \label{eq:L2_gain}
\frac{\begin{Vmatrix}C_{dc}\hat{x}_{dc}(t)-y_{dc}(t)\end{Vmatrix}_{\mathcal{L}_2}}{\begin{Vmatrix}g(t)\end{Vmatrix}_{\mathcal{L}_2}} < \varepsilon
\end{equation}
is minimized. Here $\varepsilon$ is a certain positive performance scalar. Let us denote the minimum and maximum eigenvalues of $K_{dc}$ by $\lambda_{min}(K_{dc})$ and $\lambda_{max}(K_{dc})$, respectively, the Euclidean norm of $e_{obs}(t)$ by $\begin{Vmatrix}e_{obs}(t)\end{Vmatrix}$, and define $\alpha$ as a positive scalar. Observer gains are designed through the following theorem:
\begin{theorem}[] \label{th:thm1}
	Consider the error dynamics \eqref{eq:eo1}. If there exist a symmetric, positive definite matrix $K_{dc}\in\mathbb{R}^{2\times  2}$ and a real matrix $L_{dc}\in\mathbb{R}^{2\times 1}$ such that the following convex optimization problem is feasible:
	\begin{align}
	min~\nu~s.t.	\nonumber
	\end{align}
	\begin{equation}
	\begin{bmatrix}
	\Phi & K_{dc}B_{\xi} \\
	B_{\xi}^TK_{dc} & -\nu
	\end{bmatrix} < 0, \\
	\label{eq:dclmi1}
	\end{equation}
	\[\Phi=A_{dc}^TK_{dc}+K_{dc}A_{dc}-C_{dc}^TL_{dc}^T-L_{dc}C_{dc}+C_{dc}^TC_{dc}+2\alpha K_{dc},\]
	with $\nu=\varepsilon^2$, then the error dynamics \eqref{eq:eo1} are exponentially stable satisfying 
	\begin{equation}
	\begin{Vmatrix}e_{obs}(t)\end{Vmatrix}\leq\sqrt{\frac{\lambda_{max}(K_{dc})}{\lambda_{min}(K_{dc})}}e^{-\alpha t}\begin{Vmatrix}e_{obs}(0)\end{Vmatrix}
	\label{eq:eonorm}
	\end{equation}
when $g(t)=0$, and satisfy the $\mathcal{L}_2$ gain performance objective \eqref{eq:L2_gain} when $g(t)\neq 0$.
\end{theorem}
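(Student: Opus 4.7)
The plan is to derive both conclusions from a single quadratic Lyapunov candidate $V(e_{obs}) = e_{obs}^{T} K_{dc} e_{obs}$, which is positive definite since $K_{dc}$ is symmetric positive definite. Differentiating $V$ along \eqref{eq:eo1} gives
\begin{equation*}
\dot{V} = e_{obs}^{T}\bigl(A_{dc}^{T}K_{dc} + K_{dc}A_{dc} - C_{dc}^{T}L_{dc}^{T} - L_{dc}C_{dc}\bigr) e_{obs} + 2 e_{obs}^{T} K_{dc} B_{\xi}\, g(t),
\end{equation*}
where the factors $K_{dc}$ and $K_{dc}^{-1}$ cancel cleanly precisely because the observer injection is parameterised as $K_{dc}^{-1} L_{dc}$; this is the structural trick that turns the design conditions into an LMI jointly in $K_{dc}$ and $L_{dc}$.

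For the unforced case $g \equiv 0$, I would extract from the $(1,1)$ block of \eqref{eq:dclmi1} that $\Phi < 0$, and then substitute the definition of $\Phi$ to rearrange this as $A_{dc}^{T}K_{dc} + K_{dc}A_{dc} - C_{dc}^{T}L_{dc}^{T} - L_{dc}C_{dc} + 2\alpha K_{dc} < -C_{dc}^{T}C_{dc} \le 0$. This yields $\dot{V} \le -2\alpha V$, hence $V(t) \le e^{-2\alpha t} V(0)$. Sandwiching $V$ between $\lambda_{min}(K_{dc})\|e_{obs}\|^{2}$ and $\lambda_{max}(K_{dc})\|e_{obs}\|^{2}$ and taking square roots produces exactly \eqref{eq:eonorm}.

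For the forced case, I would establish the standard $\mathcal{L}_{2}$ dissipation inequality $\dot{V} + \|C_{dc}e_{obs}\|^{2} - \nu\|g\|^{2} \le 0$. Expanding the left-hand side recovers the quadratic form
\begin{equation*}
\begin{bmatrix} e_{obs} \\ g \end{bmatrix}^{T} \begin{bmatrix} \Phi - 2\alpha K_{dc} & K_{dc}B_{\xi} \\ B_{\xi}^{T}K_{dc} & -\nu \end{bmatrix} \begin{bmatrix} e_{obs} \\ g \end{bmatrix},
\end{equation*}
which is dominated by the same expression with $\Phi$ in place of $\Phi - 2\alpha K_{dc}$ since $-2\alpha K_{dc}$ is negative semidefinite, so feasibility of \eqref{eq:dclmi1} already implies the dissipation inequality. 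Integrating from $0$ to $\infty$ with $e_{obs}(0) = 0$ and using $V \ge 0$ then yields $\|C_{dc}\hat{x}_{dc} - y_{dc}\|_{\mathcal{L}_{2}}^{2} \le \nu \|g\|_{\mathcal{L}_{2}}^{2} = \varepsilon^{2}\|g\|_{\mathcal{L}_{2}}^{2}$, which is \eqref{eq:L2_gain}.

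The main obstacle is bookkeeping rather than any deep inequality: one has to verify that the extra $C_{dc}^{T}C_{dc}$ summand baked into $\Phi$ is precisely the term generated by the output-norm penalty $\|C_{dc}e_{obs}\|^{2}$, and that the $2\alpha K_{dc}$ summand can be safely dropped in the $\mathcal{L}_{2}$ step without weakening \eqref{eq:dclmi1}. Together, these two observations are what allow the single LMI \eqref{eq:dclmi1} to certify the exponential decay rate $\alpha$ and the gain bound $\varepsilon$ simultaneously.
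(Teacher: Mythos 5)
Your proposal is correct and follows essentially the same route as the paper's proof: the same Lyapunov candidate $V=e_{obs}^{T}K_{dc}e_{obs}$, the same cancellation from the $K_{dc}^{-1}L_{dc}$ observer-gain parameterisation, the $(1,1)$-block argument with $C_{dc}^{T}C_{dc}\geq 0$ for exponential decay, and the dissipation inequality (with the $2\alpha K_{dc}$ term absorbed as a negative-semidefinite surplus) for the $\mathcal{L}_2$ gain. The only difference is direction of presentation --- you start from the LMI and derive both conclusions, while the paper builds up to the LMI from the performance conditions --- which is not a substantive distinction.
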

\begin{proof}
Using the definition of $\mathcal{L}_2$ gain, inequality \eqref{eq:L2_gain} can be rewritten after squaring both sides and some mathematical manipulation, as
	\begin{equation}
	\int_{0}^{\infty}(e_{obs}(t)^TC_{dc}^TC_{dc}e_{obs}(t)-\varepsilon^2g(t)^Tg(t))<0, 
	\label{eq:L2_gain2}
	\end{equation}
on which we impose the condition
	\begin{equation}
	\int_{0}^{\infty}(e_{obs}(t)^TC_{dc}^TC_{dc}e_{obs}(t)-\varepsilon^2g(t)^Tg(t))<-V_{obs}(e_{obs}(t)) 
	\label{eq:L2_gain3}
	\end{equation}
where $V_{obs}(e_{obs}(t))$ is a Lyapupov function for the observation error dynamics defined as 
	\begin{equation}
V_{obs}(e_{obs}(t))=e_{obs}(t)^TK_{dc}e_{obs}(t).
\label{eq:Vo1}
\end{equation}
Letting $V_{obs}(0)=0$, \eqref{eq:L2_gain3} can be cast as
	\begin{equation}
	\int_{0}^{\infty}(e_{obs}(t)^TC_{dc}^TC_{dc}e_{obs}(t)-\varepsilon^2g(t)^Tg(t)+\dot{V}_{obs}(e_{obs}(t)))dt<0 
	\label{eq:L2_gain4}
	\end{equation}
	which holds if 
	\begin{equation}
	e_{obs}(t)^TC_{dc}^TC_{dc}e_{obs}(t)-\varepsilon^2g(t)^Tg(t)+\dot{V}_{obs}(e_{obs}(t))<0 
	\label{eq:L2_gain5}
	\end{equation}
holds. Moreover, if the following inequality is satisfied then \eqref{eq:L2_gain5} also holds:
	\begin{equation}
	\begin{aligned}
	e_{obs}(t)^TC_{dc}^TC_{dc}e_{obs}(t)-\varepsilon^2g(t)^Tg(t)+\dot{V}_{obs}(e_{obs}(t))+2\alpha V_{obs}(e_{obs}(t))<0.
	\end{aligned}
	\label{eq:L2_gain6}
	\end{equation}
Differentiating $V_{obs}(e_{obs}(t))$ and simplifying gives 
\begin{equation}
\begin{aligned}
\dot{V}_{obs}(e_{obs}(t))&=e_{obs}(t)^T(A_{dc}^TK_{dc}+K_{dc}A_{dc}-C_{dc}^TL_{dc}^T \\
&-L_{dc}C_{dc})e_{obs}(t) +e_{obs}(t)^TK_{dc}B_{\xi}g(t)\\
&\qquad+g(t)^TB_{\xi}^TK_{dc}e_{obs}(t).
\end{aligned}
\label{eq:Vo2}
\end{equation}
Substituting \eqref{eq:Vo1} and \eqref{eq:Vo2} in \eqref{eq:L2_gain6} and writing it in matrix form leads to
	\begin{equation}
	\begin{bmatrix}e_{obs}(t)\\ g(t)\end{bmatrix}^T
	\begin{bmatrix}\Phi&K_{dc}B_{\xi}\\B_{\xi}^TK_{dc}&-\varepsilon^2\end{bmatrix}
	\begin{bmatrix}e_{obs}(t)\\ g(t)\end{bmatrix}<0,\\
	\label{eq:L2_gain7}
	\end{equation}
where $\Phi$ is defined in \eqref{eq:dclmi1}. Substituting $\nu=\varepsilon^2$, a sufficient condition for \eqref{eq:L2_gain7} to hold is 
	\begin{equation}
	\begin{bmatrix}\Phi&K_{dc}B_{\xi}\\B_{\xi}^TK_{dc}&-\nu\end{bmatrix}<0
	\label{eq:L2_gain8}
	\end{equation}
which proves \eqref{eq:dclmi1} in Theorem \ref{th:thm1}, and hence satisfies the $\mathcal{L}_2$ gain performance constraint \eqref{eq:L2_gain}. To fulfill the exponential stability requirement \eqref{eq:eonorm} on the $e_{obs}(t)$ dynamics, we seek the sufficient conditions to have 
\begin{equation}
\dot{V}_{obs}(e_{obs}(t)) \leq -2\alpha V_{obs}(e_{obs}(t))
\label{eq:Vo3}
\end{equation} 
because \eqref{eq:Vo3} implies the following set of inequalities:
\begin{equation}
\begin{aligned}
V_{obs}(e_{obs}(t))&\leq e^{-2\alpha t}V_{obs}(0),\\
\lambda_{min}(K_{dc})\begin{Vmatrix}e_{obs}(t)\end{Vmatrix}^2&\leq V_{obs}(e_{obs}(t)) \leq e^{-2\alpha t}V_{obs}(0) \\ &\qquad\qquad\leq \lambda_{max}(K_{dc})e^{-2\alpha t}\begin{Vmatrix}e_{obs}(0)\end{Vmatrix}^2.
\end{aligned}
\label{eq:Vo6}
\end{equation}
From the second inequality in \eqref{eq:Vo6}, we can derive
\begin{equation}
\begin{Vmatrix}e_{obs}(t)\end{Vmatrix}\leq\sqrt{\frac{\lambda_{max}(K_{dc})}{\lambda_{min}(K_{dc})}}e^{-\alpha t}\begin{Vmatrix}e_{obs}(0)\end{Vmatrix}
\label{eq:Vo7}
\end{equation}
showing that \eqref{eq:eonorm} in Theorem \ref{th:thm1} holds. Substituting \eqref{eq:Vo1} and \eqref{eq:Vo2} into \eqref{eq:Vo3} leads to
\begin{equation}
\begin{aligned}
&e_{obs}(t)^T(A_{dc}^TK_{dc}+K_{dc}A_{dc}-C_{dc}^TL_{dc}^T-L_{dc}C_{dc}+ 2\alpha K_{dc})e_{obs}(t) \leq 0
\end{aligned}
\label{eq:Vo4}
\end{equation}
when $g(t)=0$. Clearly, \eqref{eq:Vo4} holds if 
\begin{equation}
A_{dc}^TK_{dc}+K_{dc}A_{dc}-C_{dc}^TL_{dc}^T-L_{dc}C_{dc}+2\alpha K_{dc} \leq 0
\label{eq:Vo5}
\end{equation}
holds. Now, when $g(t)=0$ in \eqref{eq:L2_gain7}, it reduces to $e_{obs}(t)^T\Phi e_{obs}(t)<0$, for which it is sufficient that $\Phi<0$, implying that \eqref{eq:Vo5} holds since $C_{dc}^TC_{dc}\geq 0$, and hence fulfilling the exponential stability requirement on \eqref{eq:eo1}. This completes the proof.
\end{proof}
The control law $u(t)$ is designed such that the disturbances affecting the $v_{dc}(t)$ dynamics are estimated using observer \eqref{eq:vdcobs1} and canceled. To this end, we set 
\begin{equation}
u(t)=k_{dc}(v_{dc}(t)-v_{dc}^{*}(t))+\hat{\xi}_{dc}(t),
\label{eq:controlvdc}
\end{equation}
where $v_{dc}^*(t)$ is the reference signal for $v_{dc}(t)$ and $k_{dc}$ is a positive  design constant to be selected. It is worth noting that \eqref{eq:controlvdc} is a simple proportional controller with an added disturbance compensation term $\hat{\xi}_{dc}(t)$. 

\subsection{Current Control} 			\label{ssec:currentcontrol}
Parameters $R$ and $L$ are not accurately known but are assumed to lie in a range 
\begin{equation}
\begin{aligned}
R_{min}\leq R \leq R_{max},\\
L_{min}\leq L \leq L_{max}.
\end{aligned}
\label{eq:RLbounds}
\end{equation}
Omitting the $\alpha\beta$ subscripts in \eqref{eq:model2}, grid current dynamics can be written in a general form as
\begin{equation}
\frac{di(t)}{dt}=-\frac{R}{L}i(t) + \frac{1}{L}u_t(t) - \frac{1}{L}V_{grid}(t) + \zeta(t)
\label{eq:igeneric}
\end{equation}
where $i(t)$ is the state variable, $u_t(t)$ is the control input, $V_{grid}(t)$ is the grid voltage and $\zeta(t)$ denotes all the disturbances and unmodeled effects. Treating the grid voltage $V_{grid}(t)$ as a disturbance, \eqref{eq:igeneric} can be rewritten as
\begin{equation}
\frac{di(t)}{dt}=-\frac{R}{L}i(t) + \frac{1}{L}u_t(t) + \xi(t)
\label{eq:igeneric2}
\end{equation}
with $\xi(t)=-\frac{1}{L}V_{grid}(t)+\zeta(t)$ as the new disturbance. Let
\begin{equation}
\begin{aligned}
\rho_1&=\frac{R}{L},~\rho_{1,min}=\frac{R_{min}}{L_{max}},~\rho_{1,max}=\frac{R_{max}}{L_{min}},\\
\rho_2&=\frac{1}{L},~\rho_{2,min}=\frac{1}{L_{max}},~\rho_{2,max}=\frac{1}{L_{min}}.
\end{aligned}
\label{eq:thetadef1}
\end{equation}
We then have
\begin{equation}
\begin{aligned}
&\rho_{1,min}\leq\rho_1\leq\rho_{1,max},\\
&\rho_{2,min}\leq\rho_2\leq\rho_{2,max}.\\
\end{aligned}
\label{eq:thetadef2}
\end{equation}
In matrix form, \eqref{eq:igeneric2} can be written  as
\begin{equation}
\begin{aligned}
\dot{i}(t)&=A_u(\rho)i(t)+B_u(\rho)u_t(t)+B_{du}\xi(t),\\
y(t)&=C_ui(t),
\end{aligned}
\label{eq:igeneric3}
\end{equation}
with $A_u(\rho)=-\rho_1$, $B_u(\rho)=\rho_2$, $B_{du}=1$ and $C_u=1$. Uncertainties $A_u(\rho)$ and $B_u(\rho)$ can be written in the form
\begin{equation}
A_u(\rho)=\sum_{i=1}^{\mu}\eta_{i}A_{u,i},~B_u(\rho)=\sum_{i=1}^{\mu}\eta_{i}B_{u,i}
\label{eq:apbpdef}
\end{equation}
for $\eta=\begin{bmatrix}\eta_{1}&\eta_{2}&\cdots&\eta_{\mu}\end{bmatrix}^T\in\mathbb{R}^{\mu},$ an uncertain constant parameter vector satisfying
\begin{equation}
\eta\in\Omega:=\Bigg\{\eta\in\mathbb{R}^{\mu}:\eta_{i}\geq0~(i=1,\cdots,\mu),~\sum_{i=1}^{\mu}\eta_{i}=1 \Bigg\}.
\label{eq:uncvec}
\end{equation}
We design a low-pass filter based repetitive controller (RC) shown in Fig. \ref{fig:lpfrc} for tracking the grid currents, based on the results presented in \cite{Bonan2011}, but alter them to treat uncertainties in the input matrix that were not treated therein. The low-pass filter ensures closed loop stability. 
\begin{figure}[b!]
	\centering
	\includegraphics[scale=1]{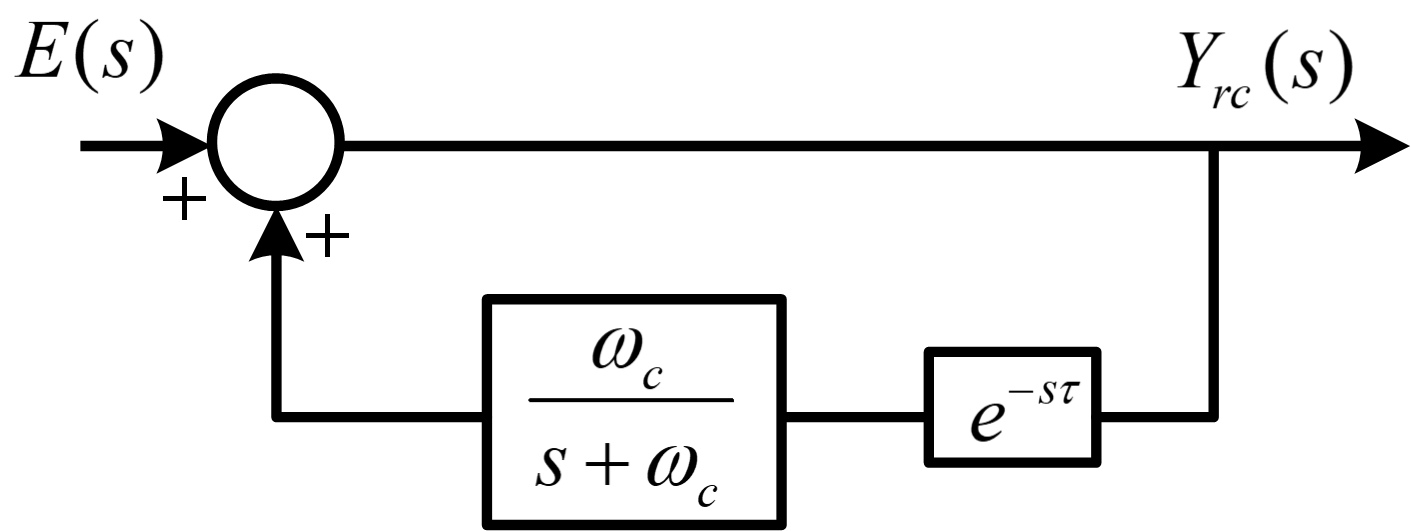}
	\caption{Repetitive controller with a low-pass filter.}
	\label{fig:lpfrc}
\end{figure}
The closed loop form of the controller in Fig. \ref{fig:lpfrc} can be realized in state-space as \cite{Bonan2011}
\begin{equation}
\begin{aligned}
\dot{x}_{rc}(t)&=-\omega_cx_{rc}(t)+\omega_cx_{rc}(t-\tau)+\omega_ce(t-\tau),\\
y_{rc}(t)&=x_{rc}(t) + e(t),
\end{aligned}
\label{eq:ssrc}
\end{equation}
where $x_{rc}(t)$ is the filter state, $\omega_c$ is the low-pass filter cutoff frequency ($\mathrm{rad/s}$), $\tau$ is a time delay equal to the inverse of grid frequency ($\mathrm{s}$), and $e(t)$ is the current tracking error defined as $e(t)=i^{*}(t)-C_ui(t)$ with $i^{*}(t)$ as the reference for current. $E(s)$ and $Y_{rc}(s)$ denote the Laplace transforms of $e(t)$ and $y_{rc}(t)$, respectively.  Defining a new state vector as $x(t)=\begin{bmatrix}i(t)&x_{rc}(t)\end{bmatrix}^T$, the augmented system can be written as
\begin{equation}
\begin{aligned}
\dot{x}(t)&=A(\rho)x(t)+A_dx(t-\tau)+B(\rho)u_t(t)\\
&+B_d\xi(t)+\bar{B}_ri^{*}(t-\tau)
\end{aligned}
\label{eq:augsys}
\end{equation}
where
\begin{equation}
\begin{aligned}
x(t)&=\begin{bmatrix}i(t)\\x_{rc}(t)\end{bmatrix},~
A(\rho)=\begin{bmatrix}A_u(\rho)&0\\0&-\omega_c\end{bmatrix},\\
B(\rho)&=\begin{bmatrix}B_u(\rho)\\0\end{bmatrix},~
A_d=\begin{bmatrix}0&0\\-C_u\omega_c&\omega_c\end{bmatrix},\\
B_d&=\begin{bmatrix}B_{du}\\0\end{bmatrix},~
\bar{B}_r=\begin{bmatrix}0\\ \omega_c\end{bmatrix}.~
\end{aligned}
\label{eq:augsysmats}
\end{equation}
Uncertain matrices $A(\rho)$ and $B(\rho)$ can be expressed as 
\begin{equation}
A(\rho)=\sum_{i=1}^{\mu}\eta_{i}A_{i},~B(\rho)=\sum_{i=1}^{\mu}\eta_{i}B_{i}.
\label{eq:atbtdef}
\end{equation}
We select a state feedback of the form
\begin{equation}
u_t(t)=k_1i(t)+k_2y_{rc}(t)
\label{eq:statefeedback}
\end{equation}
using some constants $k_1$ and $k_2$ such that \eqref{eq:augsys} is globally asymptotically stable, and the following transient performance cost function is minimized:
\begin{equation}
J(p(t))\coloneqq \begin{Vmatrix}p(t)\end{Vmatrix}^{2}=\int_{0}^{\infty}p(t)^Tp(t)dt,
\label{eq:costfcn}
\end{equation}
where $\begin{Vmatrix}p(t)\end{Vmatrix}$ represents the Eulcidean norm of $p(t)$, a controlled output function of state $x(t)$ and input $u_t(t)$ defined as
\begin{equation}
p(t)\coloneqq Gx(t)+Hu_{t}(t),
\end{equation}
where $G$ and $H$ are performance scaling matrices of appropriate dimensions. Equation \eqref{eq:statefeedback} can be written in terms of $x(t)$ as 
\begin{equation}
u_t(t)=Fx(t)+k_2i^{*}(t)
\label{eq:statefeedbackaug}
\end{equation}
leading to the closed loop dynamics
\begin{equation}
\begin{aligned}
\dot{x}(t)&=(A(\rho)+B(\rho)F)x(t)+A_dx(t-\tau)+B_d\xi(t)\\
&+B_r(\rho)r(t)
\end{aligned}
\label{eq:augsyscl}
\end{equation}
with
\begin{equation}
\begin{aligned}
F&=\begin{bmatrix}(k_1-k_2C_u)\\ k_2\end{bmatrix}^T,~
B_r(\rho)=\begin{bmatrix}B(\rho)k_2&\bar{B}_r\end{bmatrix},\\
r(t)&=\begin{bmatrix}i^{*}(t)\\ i^{*}(t-\tau)\end{bmatrix}.
\end{aligned}
\label{eq:auxdefs}
\end{equation}
According to the internal model principal \cite{Yamamoto1988,Francis1975}, if the feedback assures internal asymptotic stability for \eqref{eq:augsyscl}, then it will also asymptotically track references and reject disturbances. Hence, for internal stability of \eqref{eq:augsyscl}, we ignore the signals $\xi(t)$ and $r(t)$ and seek to robustly stabilize 
\begin{equation}
\dot{x}(t)=(A(\rho)+B(\rho)F)x(t)+A_dx(t-\tau)
\label{eq:augsysinternal}
\end{equation}
through a state feedback designed in the following theorem:
\begin{theorem}[] \label{th:th2}
	If there exist two symmetric, positive definite matrices $X\in\mathbb{R}^{2\times 2}$ and $W\in\mathbb{R}^{2\times 2}$, a real matrix $Y\in\mathbb{R}^{1\times 2}$ and a positive scalar $\gamma$ such that the following LMI is feasible:
	\begin{equation}
	\begin{aligned}
	&\begin{bmatrix}\Theta_{i}&A_dX&\Gamma\\XA_d^T&-W&\mathbf{0}\\ \Gamma^{T}&\mathbf{0}&-\gamma I\end{bmatrix} < 0,~\forall~i=1,2,\cdots,\mu,\\
	&\Theta_i=A_iX+XA_i^T+B_iY+Y^TB_i^T+W,\\
	&\Gamma=XG^{T}+Y^{T}H^{T},
	\end{aligned}
	\label{eq:ilmi}
	\end{equation}
	then the closed loop system \eqref{eq:augsysinternal} with $F=YX^{-1}$ is robustly asymptotically stable and the cost function \eqref{eq:costfcn} satisfies $\begin{Vmatrix}p(t)\end{Vmatrix}^2\leq \gamma V(x(t)=0)$ where
	\begin{equation}
	V(0)=x(0)^TX^{-1}x(0) + \int_{-\tau}^{0}x(\vartheta)X^{-1}S^{-1}X^{-1}x(\vartheta)d\vartheta.
	\label{lyapkrav0}
	\end{equation}
\end{theorem}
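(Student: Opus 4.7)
The plan is a Lyapunov-Krasovskii argument combined with a linearizing change of variables, mirroring the spirit of Theorem \ref{th:thm1} but now adapted to a time-delayed system with polytopic uncertainty in both $A(\rho)$ and $B(\rho)$. I would take the functional
\[
V(x_t) = x(t)^T P x(t) + \int_{t-\tau}^{t} x(\theta)^T Q x(\theta) d\theta
\]
with $P$ and $Q$ symmetric positive definite, differentiate along trajectories of \eqref{eq:augsysinternal}, and collect terms in the augmented vector $z(t)=[x(t)^T\ x(t-\tau)^T]^T$. This yields $\dot V = z(t)^T M(\rho) z(t)$ for a symmetric block matrix $M(\rho)$ with $(1,1)$ entry $P(A(\rho)+B(\rho)F)+(A(\rho)+B(\rho)F)^T P + Q$, off-diagonal $PA_d$, and $(2,2)$ entry $-Q$.

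To encode the cost bound $\|p(t)\|^{2}\le \gamma V(0)$, I would impose the stronger pointwise inequality $\dot V + (1/\gamma) p(t)^T p(t) < 0$. Integrating from $0$ to $\infty$ and using $V(\infty)\ge 0$ delivers the cost bound, while the form of $V(0)$ in \eqref{lyapkrav0} is consistent with this choice provided one identifies $Q = X^{-1} W X^{-1}$ (so the matrix $S$ appearing in \eqref{lyapkrav0} is $W^{-1}$). Substituting $p(t)=(G+HF)x(t)$ injects a rank-one term $(G+HF)^T(G+HF)/\gamma$ into the $(1,1)$ block; I would then unfold this to a $3\times 3$ block matrix via a Schur-complement step, placing $-\gamma I$ in the bottom-right and $(G+HF)$ in the corresponding off-diagonal row/column.

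Next I would apply a congruence transformation by $\operatorname{diag}(X, X, I)$ with $X = P^{-1}$, and introduce the linearizing substitutions $Y = FX$ and $W = XQX$. The $(1,1)$ block becomes $A(\rho)X + XA(\rho)^T + B(\rho)Y + Y^T B(\rho)^T + W$, the $(1,2)$ block becomes $A_d X$, the $(2,2)$ block becomes $-W$, and the $(1,3)$ block becomes $XG^T + Y^T H^T = \Gamma$, exactly reproducing the block structure of \eqref{eq:ilmi}. Since $A(\rho)=\sum_i\eta_i A_i$ and $B(\rho)=\sum_i\eta_i B_i$ from \eqref{eq:atbtdef}, the transformed matrix inequality is affine in $\eta$, so its negative definiteness over the simplex $\Omega$ in \eqref{eq:uncvec} reduces to the $\mu$ vertex LMIs \eqref{eq:ilmi}. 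Robust asymptotic stability of \eqref{eq:augsysinternal} follows from $\dot V < 0$ with $p$ set aside, and the feedback gain is recovered as $F = YX^{-1}$.

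The main obstacle I expect is ordering these three ingredients — the Lyapunov-Krasovskii differentiation for the delay term, the Schur-complement reshaping that exposes $-\gamma I$ and $\Gamma$, and the congruence with linearizing substitution — so as to preserve polytopic convexity in the final LMI. In particular, the congruence and variable change must be performed after the performance Schur block has been introduced; performing them in the reverse order would leave $B(\rho)$ multiplying $F$ rather than appearing in the combination $B(\rho)Y$, which would break the affine dependence on $\eta$ needed for the reduction to vertex LMIs.
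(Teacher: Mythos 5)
Your proposal is correct and follows essentially the same route as the paper: the same Lyapunov--Krasovskii functional, the dissipation inequality $\dot V+\gamma^{-1}p^Tp<0$, the congruence by $\mathrm{diag}(X,X,I)$ with the substitutions $Y=FX$ and $W=XQX$, a Schur complement to expose the $-\gamma I$ block and $\Gamma$, and convexity over the simplex to reduce to the vertex LMIs (the paper happens to apply the congruence before the Schur step, but the order is immaterial since $Y=FX$ linearizes the $B(\rho)F$ product either way). Your identification $Q=X^{-1}WX^{-1}$, i.e.\ $S=W^{-1}$ in \eqref{lyapkrav0}, also correctly resolves the otherwise undefined $S$ in the statement.
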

\begin{proof}
	Define a Lyapunov-Krasovskii functional as
	\begin{equation}
	V(x(t))=x(t)^TX^{-1}x(t) + \int_{t-\tau}^{t}x(\vartheta)Qx(\vartheta)d\vartheta,
	\label{eq:lyapkrav}
	\end{equation}
	where $X^{-1}\in\mathbb{R}^{2\times 2}$ and $Q\in\mathbb{R}^{2\times 2}$ are two symmetric, positive definite matrices. Differentiating \eqref{eq:lyapkrav} gives
	\begin{equation}
	\begin{aligned}
	\dot{V}(x(t))&=x(t)^T\Big(X^{-1}A(\rho)+A(\rho)^TX^{-1}+X^{-1}B(\rho)F+F^TB(\rho)^TX^{-1}\\&\qquad+Q\Big)x(t)+x(t)^TX^{-1}A_dx(t-\tau)+x(t-\tau)^TA_d^TX^{-1}x(t)\\
	&-x(t-\tau)^TQx(t-\tau)
	\end{aligned}
	\label{eq:vdotlk1}
	\end{equation}
	or 
	\begin{equation}
	\begin{aligned}
	\dot{V}(x(t))&=\begin{bmatrix}x(t)\\x(t-\tau)\end{bmatrix}^T
	\begin{bmatrix}\Psi_1&X^{-1}A_d\\A_d^TX^{-1}&-Q\end{bmatrix}
	\begin{bmatrix}x(t)\\x(t-\tau)\end{bmatrix},\\
	\Psi_1&=X^{-1}A(\rho)+A(\rho)^TX^{-1}+X^{-1}B(\rho)F+F^TB(\rho)^TX^{-1}+Q,
	\end{aligned}
	\label{eq:vdotlk2}
	\end{equation}
	in matrix form. According to the Lyapunov-Krasovskii theory \cite{Gu2003}, if 
	\begin{equation}
	\dot{V}(x(t))+\gamma^{-1}p(t)^{T}p(t)<0
	\label{eq:LyapKrav1}
	\end{equation}
	holds, then \eqref{eq:augsysinternal} is asymptotically stable with $\begin{Vmatrix}
	p(t)\end{Vmatrix}^2\leq\gamma V(0)$. Inequality \eqref{eq:LyapKrav1} can be written in matrix form as 
	\begin{equation}
	\begin{aligned}
	&\begin{bmatrix}x(t)\\x(t-\tau)\end{bmatrix}^T
	\begin{bmatrix}\Psi_2&X^{-1}A_d\\A_d^TX^{-1}&-Q\end{bmatrix}
	\begin{bmatrix}x(t)\\x(t-\tau)\end{bmatrix}<0,\\
	\Psi_2&=X^{-1}A(\rho)+A(\rho)^TX^{-1}+X^{-1}B(\rho)F+F^TB(\rho)^TX^{-1}\\
	&+Q+\gamma^{-1}(G+HF)^{T}(G+HF),
	\end{aligned}
	\label{eq:vdotlk2a}
	\end{equation}
	for which it is sufficient that
	\begin{equation}
	\begin{bmatrix}\Psi_2&X^{-1}A_d\\A_d^TX^{-1}&-Q\end{bmatrix}<0.
	\label{eq:vdotlk3}
	\end{equation}
	Let $F=YX^{-1}$ and $W=XQX$. Multiplying \eqref{eq:vdotlk3} on both sides by $\begin{bmatrix}X&\mathbf{0}\\\mathbf{0}&X\end{bmatrix}$, we get
	\begin{equation}
	\begin{bmatrix}\Psi_3&A_dX\\XA_d^T&-W\end{bmatrix}<0
	\label{eq:vdotlk4}
	\end{equation}
	where
	\begin{equation}
	\begin{aligned}
	\Psi_3&=A(\rho)X+XA(\rho)^T+B(\rho)Y+Y^TB(\rho)+W\\
	&\qquad+\gamma^{-1}(GX+HY)^T(GX+HY).
	\end{aligned}
	\label{eq:Psi2def}
	\end{equation}
Using \eqref{eq:atbtdef} and the Schur's complement, we can write \eqref{eq:vdotlk4} as 
	\begin{equation}
	\sum_{i=1}^{\mu}\eta_{i}\begin{bmatrix}\Theta_i&A_dX&\Gamma\\XA_d^T&-W&\mathbf{0}\\ \Gamma^{T}&\mathbf{0}&-\gamma I\end{bmatrix}<0.
	\label{eq:vdotlk5}
	\end{equation}
where $\Theta_{i}$ and $\Gamma$ are defined in \eqref{eq:ilmi}. Since $\sum_{i=1}^{\mu}\eta_{i}=1$, \eqref{eq:vdotlk5} holds if
	\begin{equation}
	\begin{bmatrix}\Theta_i&A_dX&\Gamma\\XA_d^T&-W&\mathbf{0}\\ \Gamma^{T}&\mathbf{0}&-\gamma I\end{bmatrix}<0,~ \forall i=1,\cdots,\mu
	\label{eq:vdotlk6}
	\end{equation}
holds. This proves Theorem \ref{th:th2} and completes the proof.
\end{proof}
\begin{remark}
An exponential convergence rate can be imposed on \eqref{eq:augsysinternal} so that 
	\begin{equation}
	\begin{Vmatrix}x(t)\end{Vmatrix}\leq \beta\begin{Vmatrix}x(0)\end{Vmatrix}e^{-\lambda t}.
	\label{eq:expdecay}
	\end{equation}
for some positive constants $\lambda$ and $\beta$ if the following LMI is solvable:
	\begin{equation}
	\begin{bmatrix}\Theta_i+2\lambda X&A_dX&\Gamma\\XA_d^T&-W&\mathbf{0}\\ \Gamma^{T}&\mathbf{0}&-\gamma I\end{bmatrix}<0,~ \forall i=1,\cdots,\mu
	\label{eq:expdecaylmi}
	\end{equation}
It can be shown from the proof by using the state transformation $\phi(t)=e^{\lambda t}x(t)$. See \cite{Bonan2011} for details.
\end{remark}
\begin{remark}
Both the voltage and current controllers are designed independent of the PV or FC generator dynamics. This makes the control scheme more robust and immune to the nonlinearities and disturbances present in those dynamics.
\end{remark}

\subsection{Selecting Parameters} \label{ssec:parachoice}
Define $e_{dc}(t)=v_{dc}(t)-v_{dc}^{*}(t)$ as the voltage tracking error and let $v_{dc}^{*}(t)$ be a constant. Once the observer \eqref{eq:vdcobs1} state converges, the disturbance $\xi(t)$ is perfectly estimated and canceled by the control law  \eqref{eq:controlvdc}. Error $e_{dc}(t)$ dynamics can then be written by using \eqref{eq:vdcmodel3} as
\begin{equation}
\dot{e}_{dc}(t)=-k_{dc}e_{dc}(t)
\label{eq:edcdot}
\end{equation} 
which implies that 
\begin{equation}
{e}_{dc}(t)=e_{dc}(0)e^{-k_{dc}t}.
\label{eq:edc}
\end{equation} 
Hence, $\lambda$ and $k_{dc}$ dictate the convergence rates of the dc voltage and phase current closed loop dynamics, respectively, as seen in \eqref{eq:expdecay} and \eqref{eq:edc}. Dynamics of current should at least be five times faster than dynamics of the dc-link voltage to prevent them from interacting with each other \cite{Mirhosseini2016}. The current loop's time constant, denoted by $\tau_i$, is normally chosen between $0.5\text{-}5~\mathrm{ms}$ \cite{Yazdani2010}. Therefore, $\lambda$ and $k_{dc}$ can be chosen as 
\begin{equation}
\lambda=\frac{1}{\tau_i},~k_{dc}=\frac{\lambda}{5}.
\end{equation}
For the low-pass filter, a small value of $\omega_c$ yields a good transient response but compromises steady state performance \cite{Bonan2011}. Since we're controlling the transient responses of dc-link voltage and grid currents through the cost function \eqref{eq:costfcn} as well as the parameters $\lambda$ and $k_{dc}$, a large $\omega_c$ is chosen to gain good steady state performance.

\subsection{Control of PV and FC Generators}	\label{pvfc_control}
The PV generator's boost converter is controlled by maximum power point tracking (MPPT) control using the perturb and observe algorithm under normal conditions. However, under voltage sags, in order to prevent the VSC currents from exceeding their limits, PV generator's power production is curtailed by controlling its dc-dc converter using the $v_{dc}(t)$ controller through $i_{dc}^{*}(t)$ as per \eqref{eq:idc1}, and subsequently generating the PWM pulses. Thus, regulating the dc-link voltage through the PV generator's boost converter keeps it stable during voltage sags. The FC generator is controlled via a simple proportional controller that acts on the difference between $P_{fc}^{*}(t)$ and $P_{fc}(t)$ and generates the switching pulses via PWM. The proportional gain is selected as $50$ to obtain the best response. 

\subsection{Controller Reference Signal Generation} 	\label{ssec:refgen}
Reference for $v_{dc}(t)$ is independently chosen but must be sufficiently high for proper operation of VSC \cite{Yazdani2010}. Currents are controlled in the stationary $\alpha\beta$ reference frame. Their references $i_{\alpha}^{*}(t)$ and $i_{\beta}^{*}(t)$ are generated using power references $P_{grid}^*(t)$ and $Q_{grid}^*(t)$, and voltages $V_{grid,\alpha}(t)$ and $V_{grid,\beta}(t)$. Under normal conditions, $P_{grid}^*(t)$ comes from the $v_{dc}(t)$ controller as per \eqref{eq:Pgrid1} and $Q_{grid}^{*}(t)$ comes from the EMS. The current references $i_{\alpha}^{*}(t)$ and $i_{\beta}^{*}(t)$ are calculated as \cite{Yazdani2010}
\begin{equation}
\begin{aligned}
i_{\alpha}^{*}(t)&=\frac{2}{3}\frac{V_{grid,\alpha}(t)P_{grid}^*(t)+V_{grid,\beta}(t)Q_{grid}^*(t)}{V_{grid,\alpha}^2(t)+V_{grid,\beta}^2(t)},\\
i_{\beta}^{*}(t)&=\frac{2}{3}\frac{V_{grid,\beta}(t)P_{grid}^*(t)-V_{grid,\alpha}(t)Q_{grid}^*(t)}{V_{grid,\alpha}^2(t)+V_{grid,\beta}^2(t)}.
\end{aligned}
\label{eq:irefsnormal}
\end{equation}
Under voltage sag conditions, both $P_{grid}^*(t)$ and $Q_{grid}^*(t)$ come from the EMS and current references $i_{\alpha}^{*}(t)$ and $i_{\beta}^{*}(t)$ are calculated using the delayed voltage control method \cite{Liu2016a} as follows:
\begin{equation}
\begin{aligned}
i_{\alpha}^{*}(t)&=\frac{2}{3}\frac{V_{grid,\beta}(t)Q_{grid}^*(t)-\tilde{V}_{grid,\beta}(t)P_{grid}^*(t)}{V_{grid,\beta}(t)\tilde{V}_{grid,\alpha}(t)-V_{grid,\alpha}(t)\tilde{V}_{grid,\beta}(t)},\\
i_{\beta}^{*}(t)&=\frac{2}{3}\frac{\tilde{V}_{grid,\alpha}(t)P_{grid}^*(t)-V_{grid,\alpha}(t)Q_{grid}^*(t)}{V_{grid,\beta}(t)\tilde{V}_{grid\alpha}(t)-V_{grid\alpha}(t)\tilde{V}_{grid\beta}(t)},
\end{aligned}
\label{eq:irefsfault}
\end{equation}
where $\tilde{V}_{grid,\alpha}(t)$ and $\tilde{V}_{grid,\beta}(t)$ are the delayed versions of $V_{grid,\alpha}(t)$ and $V_{grid,\beta}(t)$, delayed by $\tau/4$ seconds. This technique allows for a constant real power injection and sinusoidal currents waveforms during asymmetrical voltage sags. As per \eqref{eq:statefeedback}, control laws take the final form
\begin{equation}
\begin{aligned}
u_{t\alpha}&=k_1i_{\alpha}(t)+k_2(i_{\alpha}^{*}(t)-C_{u}i_{\alpha}(t)+x_{rc,\alpha}(t)),\\
u_{t\beta}&=k_1i_{\beta}(t)+k_2(i_{\beta}^{*}(t)-C_{u}i_{\beta}(t)+x_{rc,\beta}(t)),
\end{aligned}
\label{eq:icontrollaw}
\end{equation}
where $x_{rc,\alpha}(t)$ and $x_{rc,\beta}(t)$ denote the RC states for $\alpha$ and $\beta$ current control loops, respectively. Block diagram of the control system is shown in Fig. \ref{fig:controlblockdia}. Only one of the two FC dc-dc converters' power control loop is shown due to space limitation.  
\begin{figure}[t!]
	\centering
	\includegraphics[scale=0.5]{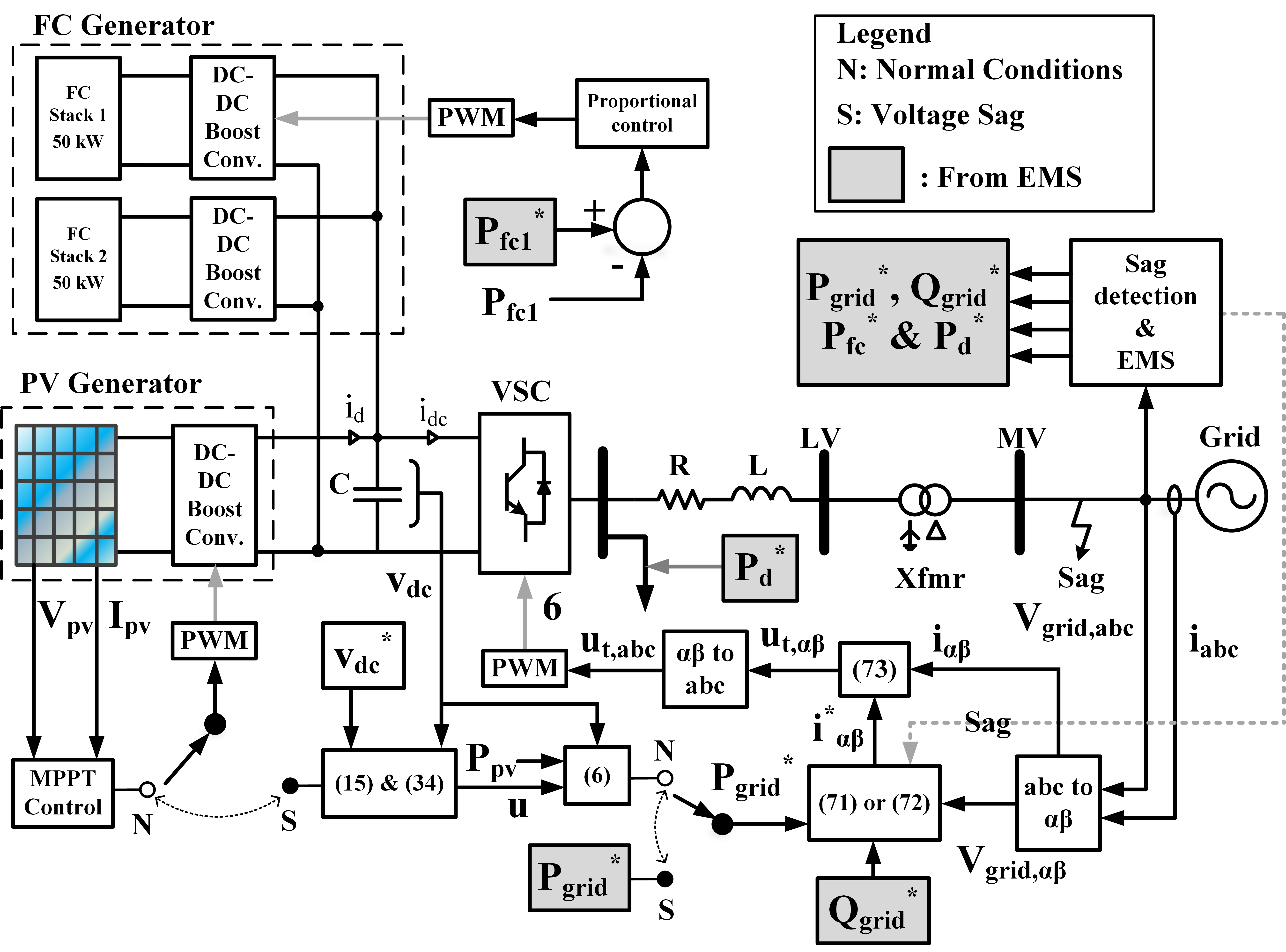}
	\caption{Control system block diagram.}
	\label{fig:controlblockdia}
\end{figure}

\section{Simulation Results and Discussion} \label{sec:simulations}
Controller gains are designed by solving LMIs \eqref{eq:dclmi1} and \eqref{eq:ilmi} using the Yalmip toolbox \cite{Lofberg2004} in MATLAB (MathWorks, Natick, MA). LMI \eqref{eq:ilmi} is solved with uncertainty limits of $\pm 30\%$ around the nominal values of $R$ and $L$. Gain and parameter values are listed in Table \ref{tbl:controlgains}. Simulations are carried out in the SimPowerSystems\textsuperscript{TM} toolbox of MATLAB at a sampling frequency of $0.2~\mathrm{MHz}$. EMS and control system performance are tested under four different cases namely case 1: unity power factor (UPF) operation, case 2: real-reactive (PQ) power support, case 3: voltage sags under nominal irradiance, and case 4: voltage sags under low irradiance. $R$, $L$ and $C$ are set to $30\%$ above their nominal values. All test case are run for $10~\mathrm{s}$ duration and reference for $v_{dc}(t)$ is set to $800~\mathrm{V}$ throughout.
\begin{table}[!t]
	\centering
	\caption{Controller Parameters and Gains}
	\label{tbl:controlgains}
	\begin{tabular}{ l r }
		\toprule[0.5mm]
		\toprule[0.5mm]
		\multicolumn{2}{c}{DC Voltage Controller}\\
		\toprule
		$\varepsilon,~\alpha,~k_{dc}$ & $0.8405,~50,~100$\\
		$K_{dc},~L_{dc}$& $\begin{bmatrix}3.6043&-0.0359\\-0.0359&0.00007\end{bmatrix}$, $\begin{bmatrix}180.8163\\0.0157\end{bmatrix}$\\
		\midrule
		\multicolumn{2}{c}{Current Controller}\\ \toprule
		$k_1,~k_2,~\lambda,~\omega_c$&$-0.1649,~1.2197\times 10^4,~500,~1000$\\
		$F$& $\begin{bmatrix}-1.2197\\1.2197\end{bmatrix}\times 10^4$\\
		\toprule
		\toprule
	\end{tabular}	
\end{table}

\subsection{Case 1: UPF Operation} \label{ssec:upf}
In this test case, the reactive power demand $Q^{*}(t)$ is set to $0~\mathrm{kVAR}$ for UPF operation. At $t=0~\mathrm{s}$, the GO provides the following step-changing profile for the real power demand $P^{*}(t)$: $150~\mathrm{kW}$ from $t=0~\mathrm{s}$ to $t=2~\mathrm{s}$, $220~\mathrm{kW}$ from $t=2~\mathrm{s}$ to $t=4~\mathrm{s}$, $80~\mathrm{kW}$ from $t=4~\mathrm{s}$ to $t=6~\mathrm{s}$, and $150~\mathrm{kW}$ from $t=6~\mathrm{s}$ to $t=10~\mathrm{s}$. Plots of $P^{*}(t)$ and real power $P_{grid}(t)$ delivered to the grid are shown in Fig. \ref{fig:case1_P} (a). PV and FC power outputs namely $P_{pv}(t)$ and $P_{fc}(t)$ and the dump load power consumption $P_d(t)$ are shown in Fig. \ref{fig:case1_P} (b). Figure \ref{fig:case1_QVdc} shows the active power delivered to the $Q_{grid}(t)$, reactive power reference $Q^{*}(t)$, voltage $v_{dc}(t)$ and its reference $v_{dc}^{*}(t)$. The irradiance is $1000~\mathrm{W/m^2}$ except during $t=6~\mathrm{s}$ to $t=8~\mathrm{s}$, where it steps down to $300~\mathrm{W/m^2}$.
\begin{figure}[t!]
	\centering
	\includegraphics[trim={0 0 0 0},scale=0.65]{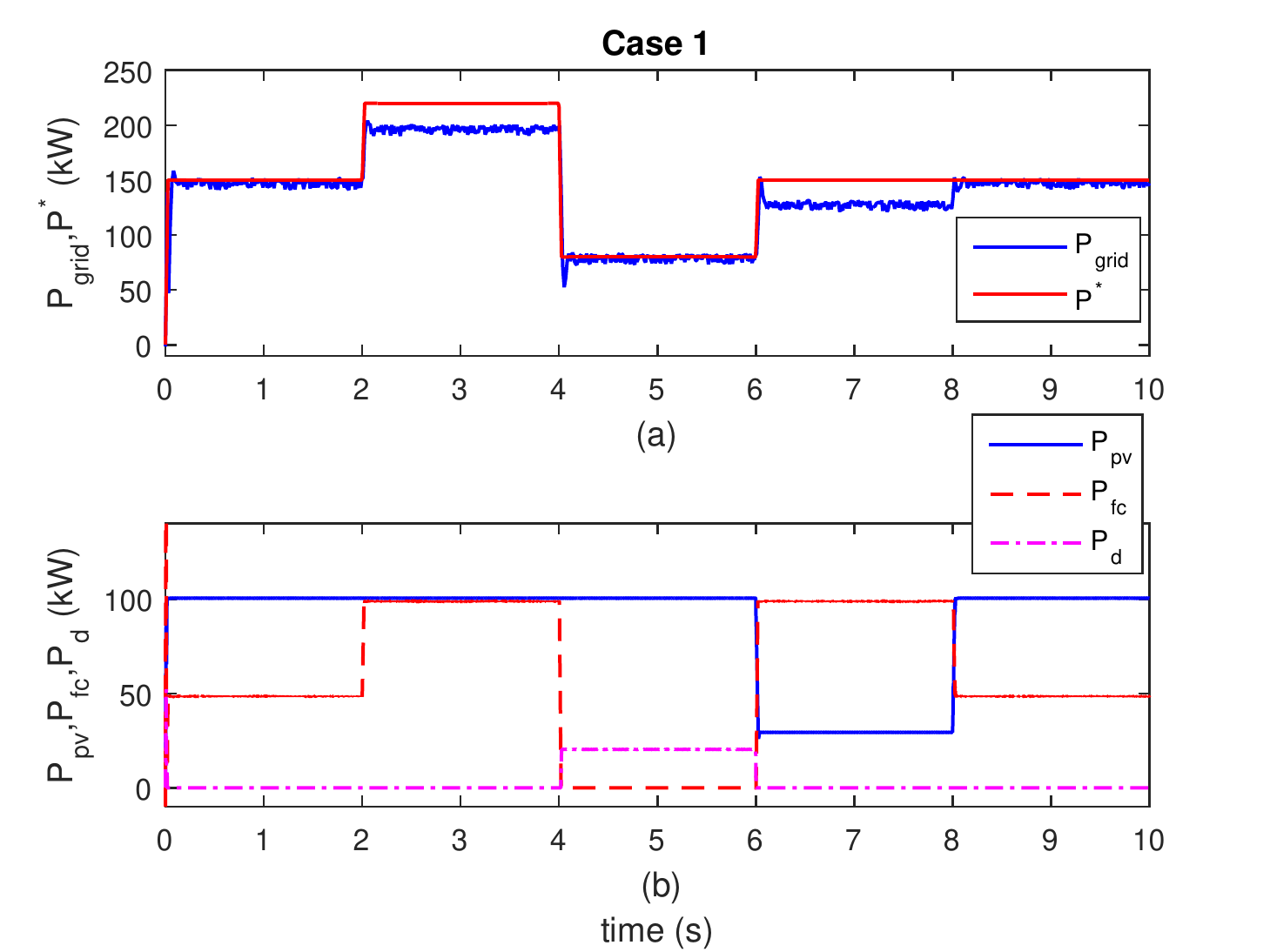}
	\caption{Case 1: UPF operation: (a) $P_{grid}(t)$ and $P^*(t)$, and (b) $P_{pv}(t)$, $P_{fc}(t)$ and $P_{d}(t)$.}
	\label{fig:case1_P}
\end{figure}
\begin{figure}[t!]
	\centering
	\includegraphics[trim={0 0 0 0},scale=0.65]{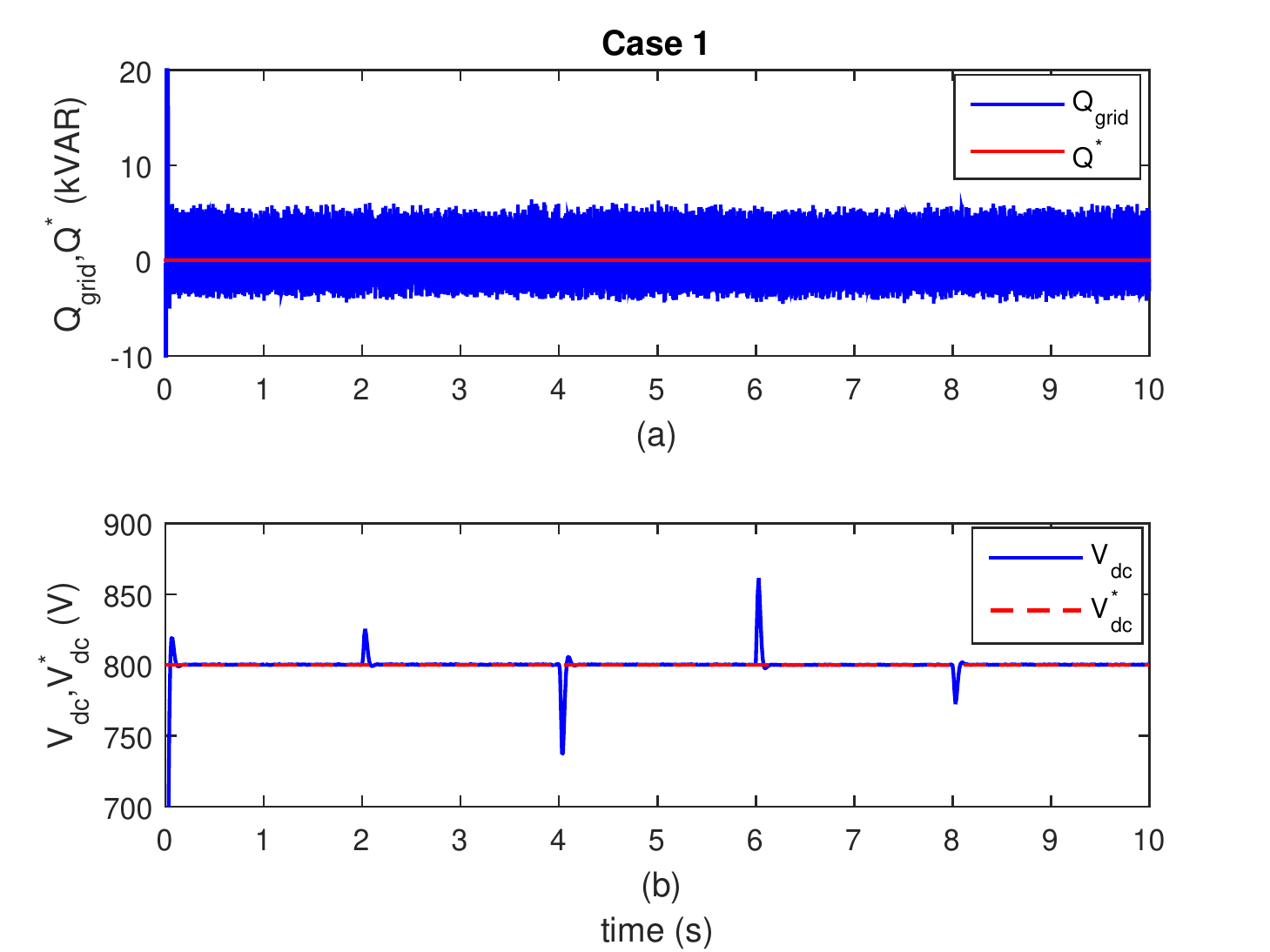}
	\caption{Case 1: UPF operation: (a) $Q_{grid}(t)$ and $Q^*(t)$, and (b) $v_{dc}(t)$ and $v_{dc}^{*}(t)$.}
	\label{fig:case1_QVdc}
\end{figure}

Figure \ref{fig:case1_P} (a) indicates that $P^{*}(t)$ is effectively tracked by the control system for the most part, with exceptions in the intervals $t=2~\mathrm{s}$ to $t=4~\mathrm{s}$ and $t=6~\mathrm{s}$ to $t=8~\mathrm{s}$ that are explained. From $t=2~\mathrm{s}$ to $t=4~\mathrm{s}$, the GO's demand of $150~\mathrm{kW}$ is completely met by the hybrid PV-FC system; $P_{pv}(t)$ is $100~\mathrm{kW}$ and the EMS assigns the $50~\mathrm{kW}$ deficit to the FC generator, as seen in Fig. \ref{fig:case1_P} (b). During $t=2~\mathrm{s}$ to $t=4~\mathrm{s}$, demand $P^{*}(t)$ increases to $220~\mathrm{kW}$ which is higher than the maximum system rating of $200~\mathrm{kW}$. Hence, the EMS sets the FC power reference to its rated value $P_{fc}^{rated}$ of $100~\mathrm{kW}$ and the system delivers all available PV power plus $P_{fc}^{rated}$ to the grid, amounting to $200~\mathrm{kW}$ and an unmet deficit of $20~\mathrm{kW}$ results. In the interval $t=4~\mathrm{s}$ to $t=6~\mathrm{s}$, $P^{*}(t)$ decreases to $80~\mathrm{kW}$. Since it is less than the available $P_{pv}(t)$ of $100~\mathrm{kW}$, a $20~\mathrm{kW}$ surplus power results and the EMS sets it to be delivered to the dump load, as observed by the $P_d(t)$ response in Fig. \ref{fig:case1_P} (b). Next, during $t=6~\mathrm{s}$ to $t=8~\mathrm{s}$ the irradiance decreases from $1000~\mathrm{W/m^2}$ to $300~\mathrm{W/m^2}$, and as a result, $P_{pv}(t)$ decreases to $29.5~\mathrm{kW}$. The EMS once again sets $P_{fc}^{*}(t)$ to $P_{fc}^{rated}$, bringing the total generation of the hybrid system to $129.5~\mathrm{kW}$ and an unmet deficit of $20.5~\mathrm{kW}$ results. In the next interval i.e., $t=8~\mathrm{s}$ to $t=10~\mathrm{s}$, $P_{pv}(t)$ goes back up to $100~\mathrm{kW}$ when the irradiance is restored to $1000~\mathrm{W/m^2}$, and the deficit of $50~\mathrm{kW}$ is thoroughly met by the FC generator.

It is seen that both the current and voltage controllers track their references with a fast transition and respond to changes very quickly. $Q_{grid}(t)$ is tightly regulated around its $0~\mathrm{kVAR}$ reference, as seen in Fig. \ref{fig:case1_QVdc} (a). The transients that occur during power transitions are reflected in $v_{dc}(t)$ response as seen in Fig. \ref{fig:case1_QVdc} (b), but the DRC controller quickly regulates $v_{dc}(t)$ back to its $800~\mathrm{V}$ reference, demonstrating its disturbance rejection traits. 

\subsection{Case 2: PQ Support} \label{ssec:pq}
In this test case, $P^{*}(t)$ profile remains the same as in case 1, but now the GO demands reactive power support from the hybrid PV-FC system according to the following profile: $100~\mathrm{kVAR}$ from $t=0~\mathrm{s}$ to $t=2~\mathrm{s}$ and from $t=6~\mathrm{s}$ to $t=10~\mathrm{s}$, and $200~\mathrm{kVAR}$ from $t=2~\mathrm{s}$ to $t=6~\mathrm{s}$. $P_{grid}(t)$, $P_{pv}(t)$, $P_{pv}(t)$, $P_{fc}(t)$ and $P_{d}(t)$ are plotted in Fig. \ref{fig:case2_P} while $Q_{grid}(t)$, $Q^{*}(t)$, $v_{dc}(t)$ and $v_{dc}^{*}(t)$ are plotted in Fig. \ref{fig:case2_QVdc}. 
\begin{figure}[t!]
	\centering
	\includegraphics[trim={0 0 0 0},scale=0.65]{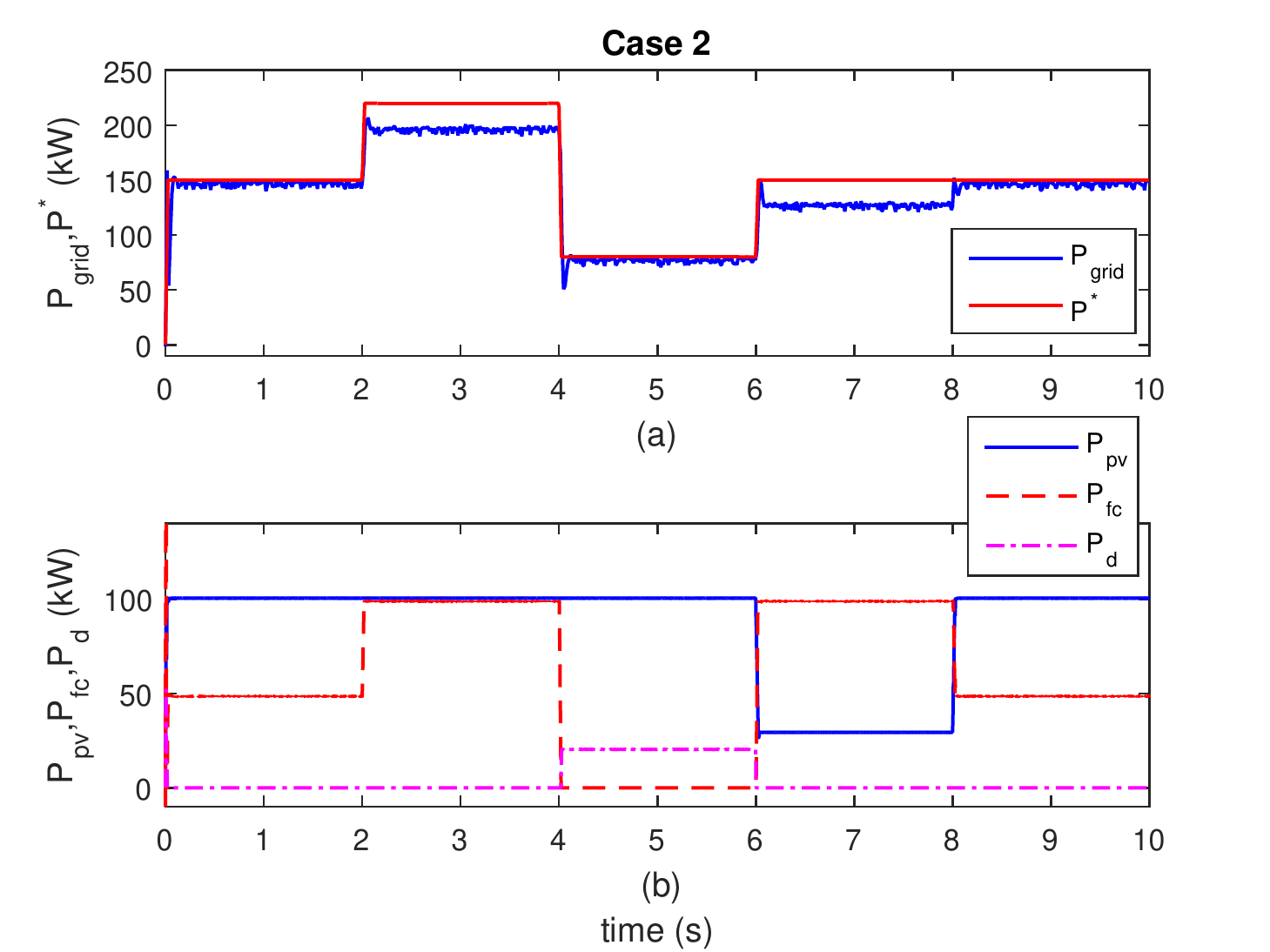}
	\caption{Case 2: PQ support: (a) $P_{grid}(t)$ and $P^*(t)$, and (b) $P_{pv}(t)$, $P_{fc}(t)$ and $P_{d}(t)$.}
	\label{fig:case2_P}
\end{figure}
\begin{figure}[t!]
	\centering
	\includegraphics[trim={0 0 0 0},scale=0.65]{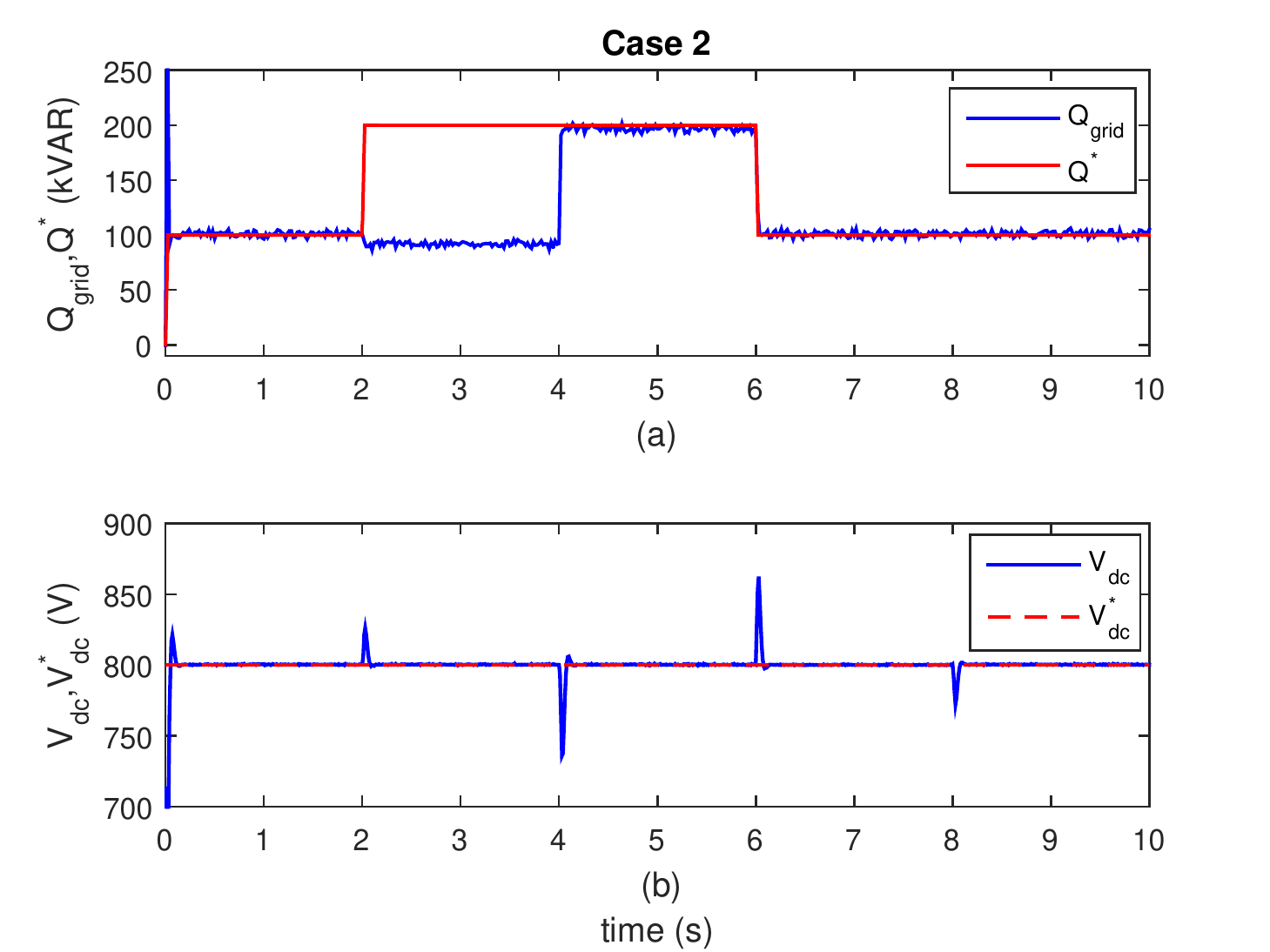}
	\caption{Case 2: PQ support: (a) $Q_{grid}(t)$ and $Q^*(t)$, and (b) $v_{dc}(t)$ and $v_{dc}^{*}(t)$.}
	\label{fig:case2_QVdc}
\end{figure} 

For this test case, $P_{grid}(t)$ has the same profile as in case 1, as seen in Fig. \ref{fig:case2_P}. During the first interval i.e., from $t=0~\mathrm{s}$ to $t=2~\mathrm{s}$, the reactive power demand $Q^{*}(t)$ is fully met by the hybrid system as shown by $Q_{grid}(t)$ response in Fig. \ref{fig:case2_QVdc} (a). Next, when $Q^{*}(t)$ is stepped up at $t=2~\mathrm{s}$, $Q_{grid}(t)$ decreases to $92.5~\mathrm{kVAR}$ since $P^{*}(t)$ also increases and the EMS adjusts $Q_{grid}^{*}(t)$ to stay within the reactive power limit $S^{max}$ of $220~\mathrm{kVA}$, as shown in the flowchart of Fig. \ref{fig:flowchart}. An unmet reactive power deficit of $57.5~\mathrm{kVAR}$ results due to the fact that the EMS prioritizes delivery of $P_{grid}(t)$ over $Q_{grid}(t)$. The deficit is met in the next interval, from $t=4~\mathrm{s}$ to $t=6~\mathrm{s}$ when $P^{*}(t)$ reduces to $80~\mathrm{kW}$ and the EMS deems it safe to deliver the prescribed amount ($150~\mathrm{kVAR}$) of reactive power to the grid. For the remainder of the test case i.e., from $t=6~\mathrm{s}$ to $t=10~\mathrm{s}$, $Q^{*}(t)$ stays at $100~\mathrm{kVAR}$, and is safely delivered by the hybrid system to the grid. Voltage and current controllers exhibit fast responses, and efficient tracking and disturbance rejection traits as seen in both Figs. \ref{fig:case2_P} and \ref{fig:case2_QVdc}. 

\subsection{Case 3: Voltage Sags under Nominal Irradiance} \label{ssec:faults1}
For this test case, three different voltage sags under nominal irradiance of  $1000~\mathrm{W/m^2}$ are applied at regular intervals: single-phase-to-ground (1PG) voltage sag from $t=1~\mathrm{s}$ to $t=3~\mathrm{s}$ with $30\%$ drop in phase $a$ voltage, two-phase-to-ground (2PG) voltage sag from $t=4~\mathrm{s}$ to $t=6~\mathrm{s}$ with $35\%$ drop in phase $a$ and $b$ voltages, and three-phase-to-ground (3PG) voltage sag from $t=7~\mathrm{s}$ to $t=9~\mathrm{s}$ with $40\%$ voltage drop in all the three phases. $P^{*}(t)$ and $Q^{*}(t)$ are set to a constant $150~\mathrm{kW}$ and $0~\mathrm{kVAR}$ respectively. During voltage sags, references $P_{grid}^{*}(t)$ and $Q_{grid}^{*}(t)$ are calculated by the EMS considering sag magnitude, VSC current limits and grid support requirements. $P_{grid}(t)$ and $Q_{grid}(t)$ are plotted in Fig. \ref{fig:case3_PQ} while $P_{pv}(t)$, $P_{fc}(t)$, $P_{d}(t)$, and $v_{dc}(t)$ are plotted in Fig. \ref{fig:case3_PVdc}. Grid voltages and currents during the asymmetrical 1PG and 2PG voltage sags are plotted in Fig. \ref{fig:case3_VI}.
\begin{figure}[t!]
	\centering
	\includegraphics[trim={0 0 0 0},scale=0.65]{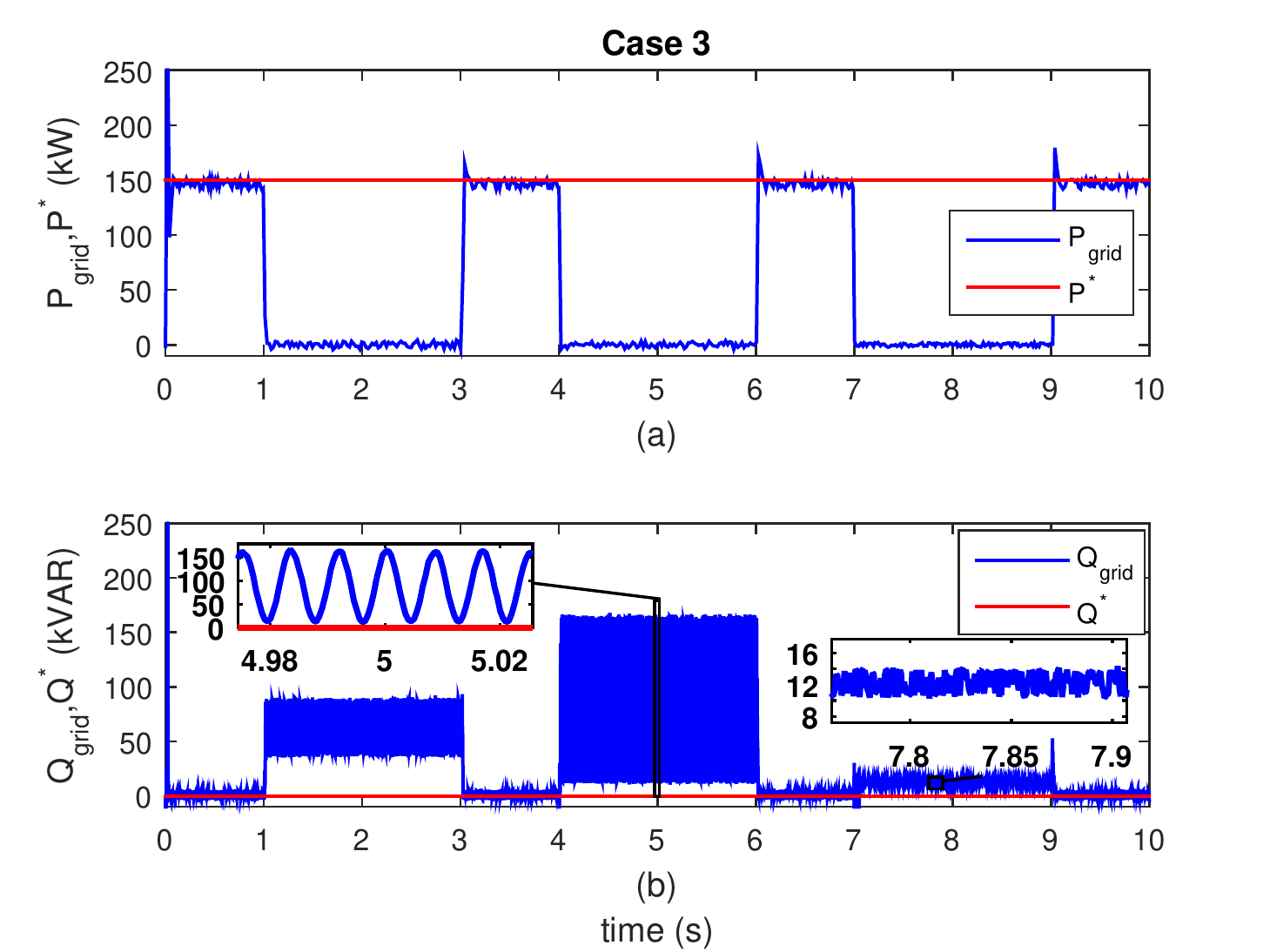}
	\caption{Case 3: Voltage sags @ $1000~\mathrm{W/m^2}$: (a) $P_{grid}(t)$ and (b) $Q_{grid}(t)$.}
	\label{fig:case3_PQ}
\end{figure}
\begin{figure}[t!]
	\centering
	\includegraphics[trim={0 0 0 0},scale=0.65]{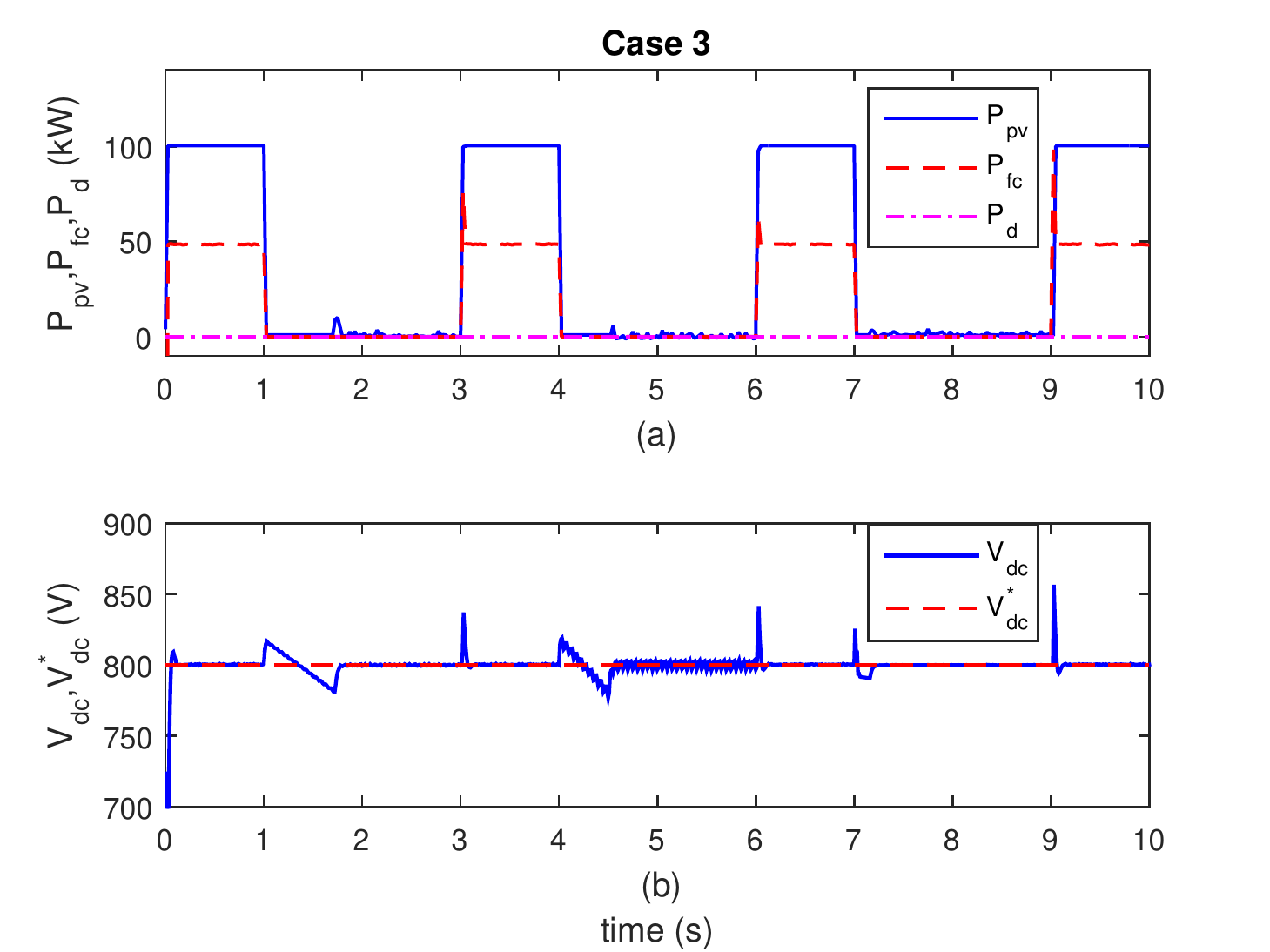}
	\caption{Case 3: Voltage sags @ $1000~\mathrm{W/m^2}$: (a) $P_{pv}(t)$, $P_{fc}(t)$ and $P_{d}(t)$, and (b) $v_{dc}(t)$ and $v_{dc}^{*}(t)$.}
	\label{fig:case3_PVdc}
\end{figure} 
\begin{figure}[t!]
	\centering
	\includegraphics[trim={0 0 0 0},scale=0.65]{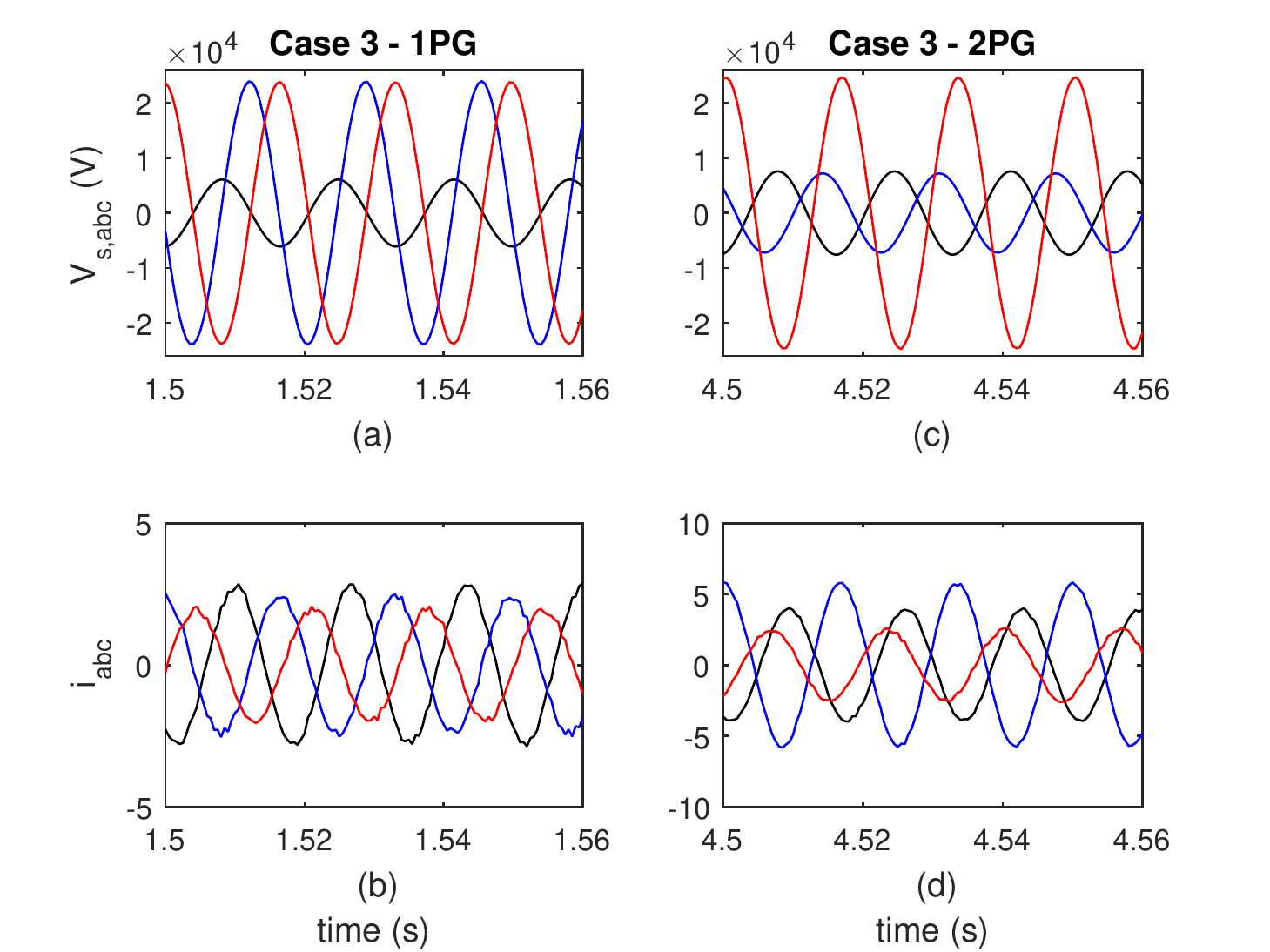}
	\caption{Case 3: Voltage sags @ $1000~\mathrm{W/m^2}$: (a) grid voltages (1PG), (b) grid currents (1PG), (c) grid voltages (2PG), and (d) grid currents (2PG).}
	\label{fig:case3_VI}
\end{figure} 

As seen in Fig. \ref{fig:case3_PQ} (a), the hybrid system delivers the prescribed $150~\mathrm{kW}$ and $0~\mathrm{kVAR}$ during normal operation and PV operates at MPPT. During the voltage sag intervals i.e., from $t=1~\mathrm{s}$ to $t=3~\mathrm{s}$, $t=6~\mathrm{s}$ to $t=4~\mathrm{s}$ and $t=7~\mathrm{s}$ to $t=9~\mathrm{s}$, $P_{grid}^{*}(t)$ is curtailed by the EMS to keep the injected currents within allowable limits. At the same time, it increases $Q_{grid}^{*}(t)$ to provide dynamic grid support. During the asymmetrical voltage sags, an oscillatory reactive power is injected into the grid with an average value of $75~\mathrm{kVAR}$ for the 1PG and $100~\mathrm{kVAR}$ for the 2PG sag, while during the symmetrical 3PG voltage sag, a constant $Q_{grid}(t)$ of $14~\mathrm{kVAR}$ is injected for grid support, as seen in Fig. \ref{fig:case3_PQ} (b).  

Figure \ref{fig:case3_PVdc} (a) indicates that the EMS effectively coordinates the real power generation between PV and FC generators during both normal and sag conditions. $P_{pv}(t)$ supplies for the bulk of $P^{*}(t)$ with the deficit coming from $P_{fc}(t)$ under normal conditions. Power generation of both the PV and FC generators is curtailed; PV power generation is reduced to zero by the DRC controlling its boost converter, while the FC power generation is curtailed to zero as its controller tries to follows the $P_{fc}^{*}(t)=0~\mathrm{kW}$ command from the EMS. The response of $v_{dc}(t)$ in Fig. \ref{fig:case3_PVdc} (b) shows transient overshoots and undershoots at sag occurrence instances as the controller tries to maintain active power balance between dc-link and the grid. The voltage is quickly recovered and regulated to its reference once the power balance is restored. During the 1PG and 2PG faults, some oscillations appear in $v_{dc}(t)$ in the intervals from $t=1~\mathrm{s}$ to $t=3~\mathrm{s}$ and from $t=4~\mathrm{s}$ to $t=6~\mathrm{s}$ due to the presence of negative sequence components in grid voltages and currents. The DRC successfully suppresses those oscillations for the 1PG fault and restricts them to within $\pm 0.5\%$ of $v_{dc}^{*}(t)$ during the 2PG fault. It is also worth noting from Fig. \ref{fig:case3_PQ} (a) and Fig. \ref{fig:case3_VI} (b) and (d) that the real power remains constant and the grid currents remain sinusoidal with a low harmonic content during the 1PG and 2PG voltage sags, owing to the time delay reference generation \cite{Liu2016a}, and sinusoidal current tracking and harmonic suppression through RC without having to use a PLL. The total harmonic distortion (THD) in grid currents is $4.41\%$ during the 1PG and $3.02\%$ during the 2PG voltage sag.

\subsection{Case 4: Voltage Sags under Low Irradiance} \label{ssec:faults2}
For this test case, 1PG, 2PG and 3PG voltage sags are applied under a low irradiance of $300~\mathrm{W/m^2}$ during the same intervals as in case 3. $P^{*}(t)$ and $Q^{*}(t)$ are set to a constant $150~\mathrm{kW}$ and $0~\mathrm{kVAR}$ respectively. $P_{grid}(t)$ and $Q_{grid}(t)$ are plotted in Fig. \ref{fig:case4_PQ} while $P_{pv}(t)$, $P_{fc}(t)$, $P_{d}(t)$, and $v_{dc}(t)$ are plotted in Fig. \ref{fig:case4_PVdc}. Grid voltages and currents during the asymmetrical 1PG and 2PG voltage sags are plotted in Fig. \ref{fig:case4_VI}.
\begin{figure}[t!]
	\centering
	\includegraphics[trim={0 0 0 0},scale=0.65]{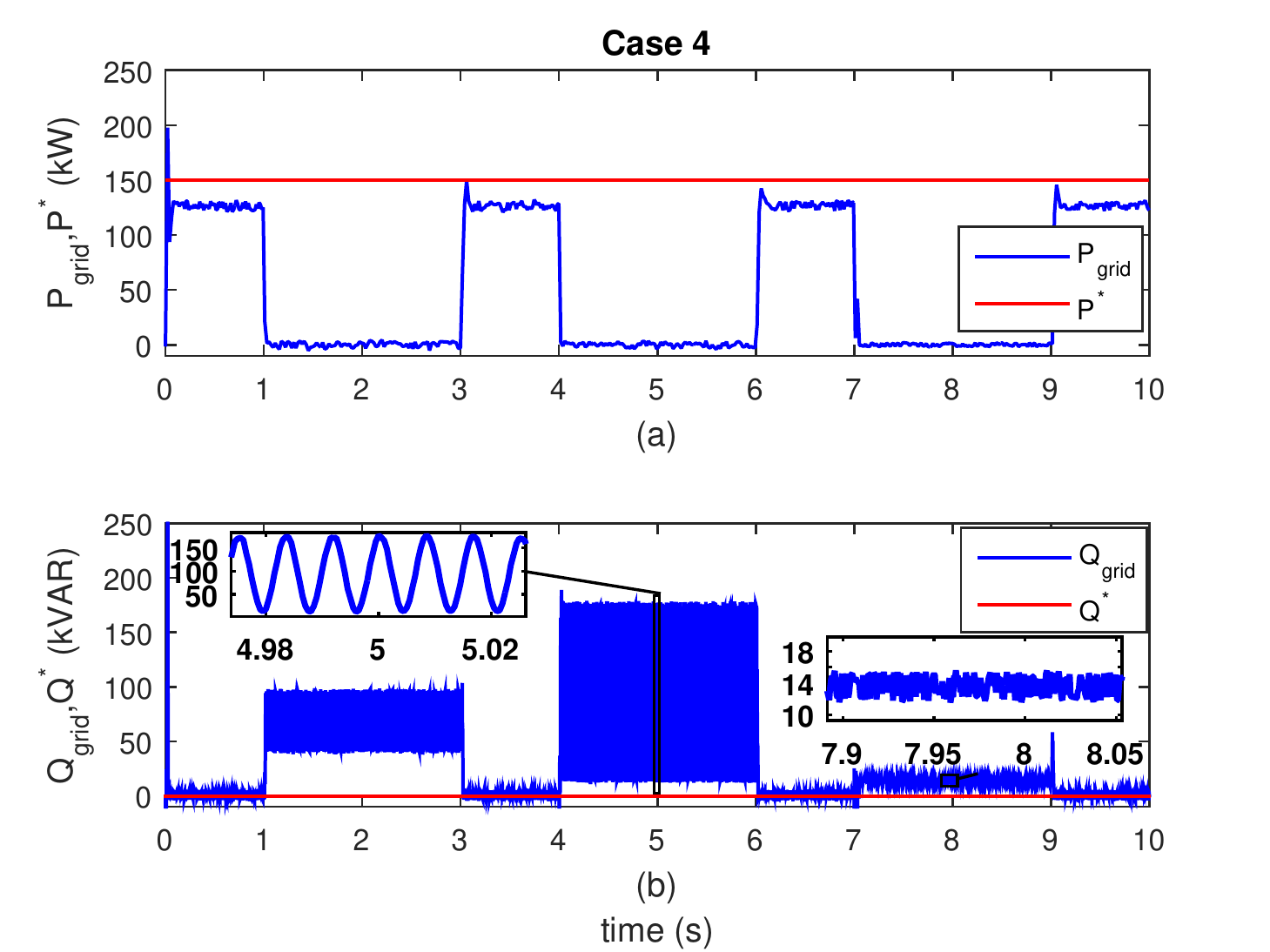}
	\caption{Case 4: Voltage sags @ $300~\mathrm{W/m^2}$: (a) $P_{grid}(t)$ and (b) $Q_{grid}(t)$.}
	\label{fig:case4_PQ}
\end{figure}
\begin{figure}[t!]
	\centering
	\includegraphics[trim={0 0 0 0},scale=0.65]{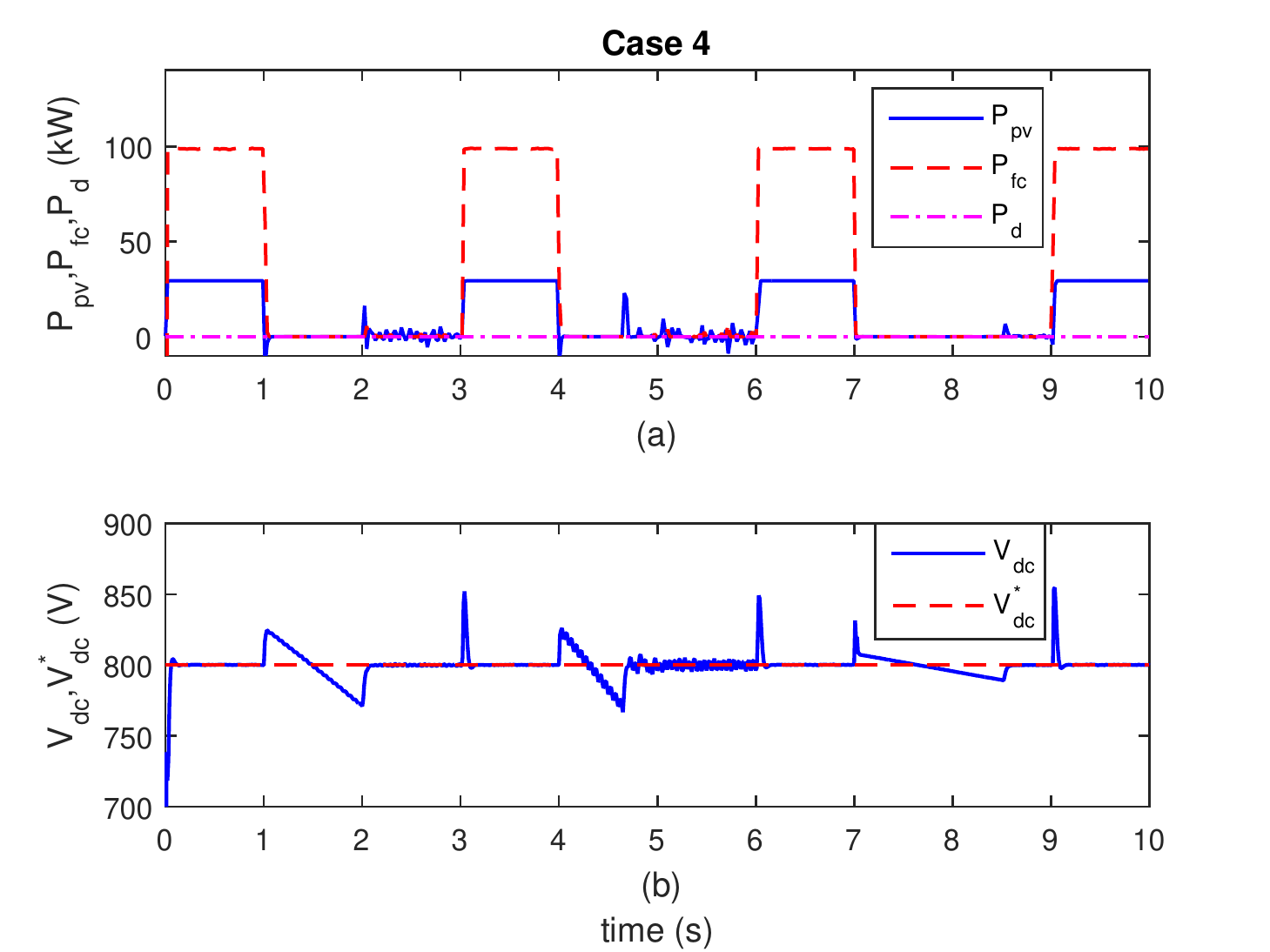}
	\caption{Case 4: Voltage sags @ $300~\mathrm{W/m^2}$: (a) $P_{pv}(t)$, $P_{fc}(t)$ and $P_{d}(t)$, and (b) $v_{dc}(t)$ and $v_{dc}^{*}(t)$.}
	\label{fig:case4_PVdc}
\end{figure}
\begin{figure}[t!]
	\centering
	\includegraphics[trim={0 0 0 0},scale=0.65]{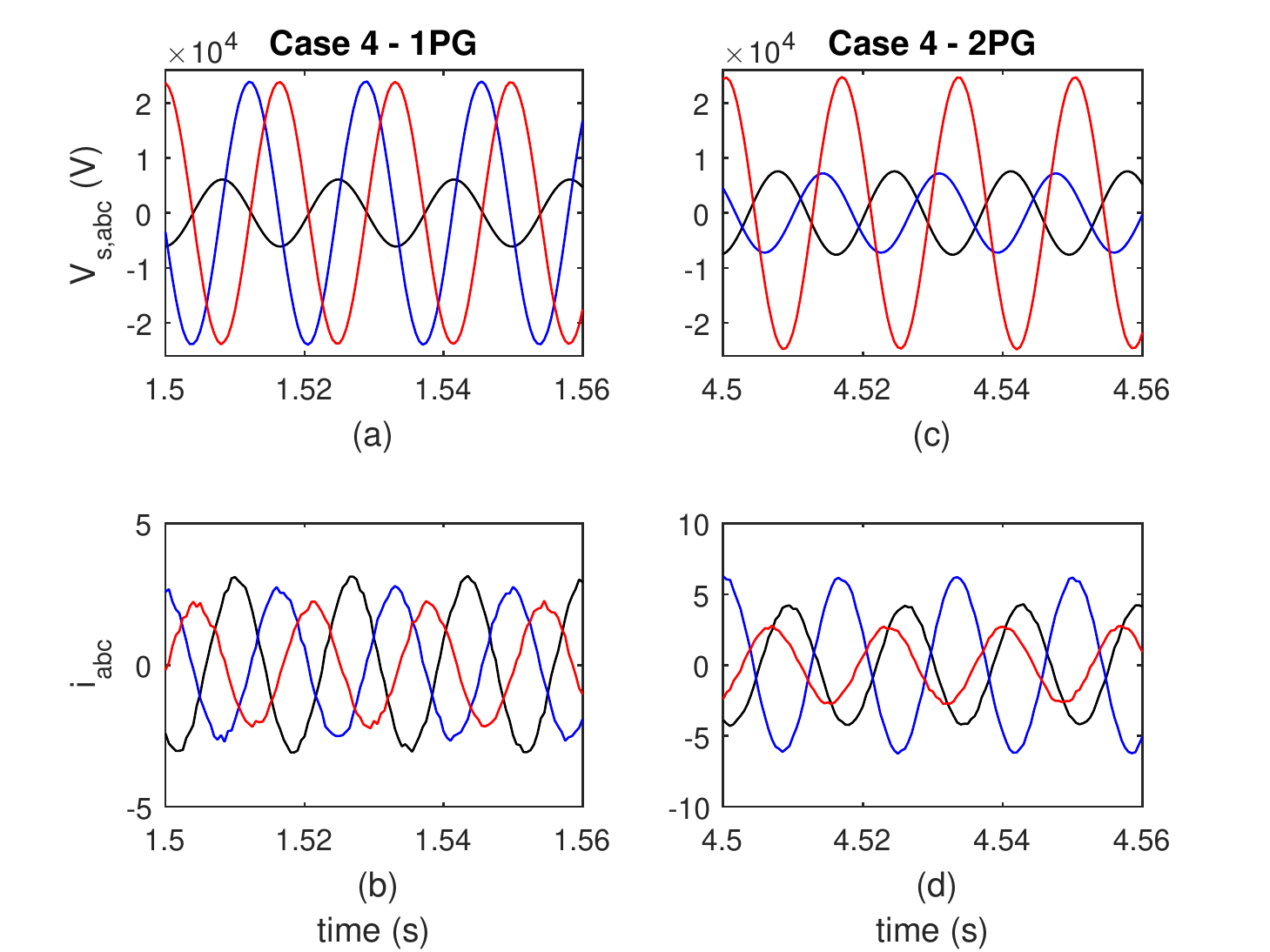}
	\caption{Case 4: Voltage sags @ $300~\mathrm{W/m^2}$: (a) grid voltages (1PG), (b) grid currents (1PG), (c) grid voltages (2PG), and (d) grid currents (2PG).}
	\label{fig:case4_VI}
\end{figure} 

Real power delivery performance of the hybrid PV-FC systems remains similar to case 3, except for a $20~\mathrm{kW}$ deficit during normal operating conditions due to low irradiance of $300~\mathrm{W/m^2}$, as seen in Fig. \ref{fig:case4_PQ} (a). The system still provides reactive power support to the grid during voltage sags in the same amounts as in case 3, shown in Fig. \ref{fig:case4_PQ} (b). The controllers track their references swiftly and efficiently despite the presence of uncertainties and external disturbances, demonstrating the high performance and robustness features of the control system.

Responses of the above test cases show that under the proposed EMS and robust control strategy, the hybrid PV-FC system's exhibits a fast, efficient and robust dynamic response and is successfully able to ride through voltage sags while providing support to the grid. The controllers have low complexity and a carefully tuned PLL is not required to extract PN sequence components during asymmetrical voltage sags, thus, reducing the design and computational requirements without sacrificing performance. 

\section{Conclusion} \label{sec:conclusion}
An efficient EMS with a robust control strategy is presented for a grid-connected hybrid PV-FC power system. Under the proposed scheme, the hybrid power plant is able to effectively operate under both normal conditions and voltage sags. The scheme enables the hybrid system to provide dynamic support to the grid during abnormal conditions, thus making it compatible with the modern grid codes and suitable for adoption in an environment having high penetration of distributed generation. The control system is systematically designed by applying Lyapunov theory and solving LMI constraints, making it convenient and easy to synthesize. Robustness features are incorporated into the control design not only through a formal mathematical treatment but by designing the controllers independent of the dynamics of PV or FC models. The control laws have low complexity and don't require a PLL, that imposes additional design and computational requirements on the system. Performance of the EMS and its control system are demonstrated through a number of test cases to demonstrate their efficacy, and robustness disturbance rejection features.  

\section*{References}

\bibliography{library}


\end{document}